\declaretheorem[name=Lemma,numberwithin=section]{mylemma}
\newcommand{\ra}{\rightarrow}
\begin{document}
\begin{frontmatter}
  \title{Structural Rules and Algebraic Properties of Intersection Types\thanksref{ALL}}
  \author{Sandra Alves\thanksref{myemail}}	
  \address{DCC-FCUP \& CRACS - INESCTEC, \\ University of Porto, Portugal}  							
  \thanks[ALL]{Partially funded by LIACC (FCT/UID/CEC/0027/2020)}   
   \thanks[myemail]{Email: \href{mailto:sandra@fc.up.pt} {\texttt{\normalshape
        sandra@fc.up.pt}}} 
  \author{Mário Florido\thanksref{coemail}}
  \address[a]{DCC-FCUP \& LIACC \\ University of Porto, Portugal}
  \thanks[coemail]{Email:  \href{mailto:amflorid@fc.up.pt} {\texttt{\normalshape
        amflorid@fc.up.pt}}}
\begin{abstract} 
 In this paper we define several notions of term expansion, used to define terms with less sharing, but with the same computational properties of terms typable in an intersection type system. Expansion relates terms typed by associative, commutative and idempotent intersections with terms typed in the Curry type system and the relevant type system; terms typed by non-idempotent intersections with terms typed in the affine and linear type systems; and terms typed by non-idempotent and non-commutative intersections with terms typed in an ordered type system. Finally, we show how idempotent intersection is related with the contraction rule, commutative intersection with the exchange rule and associative intersection with the lack of structural rules in a type system. 
\end{abstract}
\begin{keyword}
  Intersection Types, Substructural Type Systems, Linearization
\end{keyword}
\end{frontmatter}


\section{Introduction}

In the Curry Type System \cite{Curry:1934,CF:58,HS:86,Hindley:1997} each assumption about the types of free variables may be used several times or not used at all. 
This can be achieved either by considering a set of type assumptions used in the type derivation, or a list of type assumptions and the existence of three structural rules in the type system: {\em exchange}, {\em weakening} and {\em contraction}.
The  {\em exchange} rule guarantees that the order in which we write  variables in the basis is
irrelevant. The second structural rule, {\em weakening}, indicates that we  may safely add extra (unneeded) assumptions  to the basis. The third structural rule, {\em contraction}, states that if a term is typed using two identical assumptions then it is also typed using a single assumption.
 This led to the definition of a {\em substructural type system} as  type systems where the use of type assumptions is limited by the lack of one or more of the structural rules. Substructural type systems in general, have a precise relation (by the Curry-Howard correspondence) with substructural logics \cite{restall:2000}. A substructural logic is a logic where also one or more of the structural rules do not hold. Examples of well-known substructural logics include linear logic \cite{Girard:1987} and relevant logic \cite{anderson:1975}. A survey of prescriptive substructural type systems (a la Church) and its use to the control of memory resources can be found in \cite{walker}. Here we will use a descriptive view (a la Curry) of substructural type systems. Being restrictions to the Curry Type System, substructural type systems type less terms than the Curry Type System.

In the opposite direction we have Intersection Type Systems \cite{CoppoD80,CoppoDV81,bono2020tale}. They characterise exactly the set of strongly normalising terms and thus they type more terms than the Curry type system \cite{Curry:1934} or the type system of pure ML or core Haskell \cite{damas1982principal}.
 Applications of Intersection Type Systems in programming language theory cover a variety of diverse topics including the design of programming languages \cite{reynolds1997design}, program analysis \cite{PalsbergP01}, program synthesis \cite{FrankleOWZ16}, and extensions such as refinement and union types \cite{FreemanP91,DunfieldP03,Dunfield12}. But the huge expressive power of intersection types comes with a price. Type theoretic problems such as type inference and inhabitation are undecidable in general \cite{Urzyczyn94,daglib/0032840}.

In this paper we will address the following problem: to which extent can we approximate a term typed in the intersection type system by terms typable in a simpler type system, such as the Curry Type System or a substructural type system?

Let us look at the term $T \equiv (\lambda x.xx)I$, where $I$ is the
identity function $\lambda x.x$. This term has type $\alpha
\rightarrow \alpha$,  which does not involve intersections, although
it is not typable in the Curry Type System nor in any substructural type system, because it has a
non-typable subterm. The problem is the sharing of variable $x$ in $xx$, where the two shared occurrences have non-unifiable types. Now notice that
there is a term, $(\lambda x_1 x_2.x_1 x_2)II$, typable with the same type in the Curry Type System and several substructural type systems. This term uses less sharing than $(\lambda x.xx)I$ in the sense that each occurrence of a shared variable in $(\lambda x.xx)I$ corresponds to a distinct variable in $(\lambda x_1 x_2.x_1 x_2)II$.  

The present work borrows some inspiration from previous works on linearization of the $\lambda$-calculus \cite{Kfoury00,FloridoD04,AlvesF05} and aims to contribute to this line of research providing a uniform simple framework for addressing linearization related problems. 
Picking up on work in \cite{FloridoD04} we here extend the notion of term expansion to the Curry Type System and four substructural type systems: relevant, affine, linear and ordered type systems.  Under this uniform framework we  show that one can define terms with less sharing, but with the same computational properties of terms typable in an intersection type system. We will then show how we can  tune the degree of sharing by choosing different algebraic properties of the intersection operator. 

The paper is organized as follows. Section 2 presents the type systems used in the paper and sets the ground for subsequent developments. Section 3 presents several definitions of term expansion tuned by different algebraic properties of intersection types. We first use a new notion of expansion based on associative, commutative and idempotent intersection types and show how it relates intersection typed terms with terms typed in the Curry type system and the relevant type system. Then we recall previous work using non-idempotent intersection types to expand terms into linear and affine terms. Finally we use non-idempotent and non-commutative intersections and show that expanded terms are typed in an ordered type system. We further show the correcteness of all these notions of expansion in the sense that they are closed by reduction, meaning that the expansion of the normal form of a term $M$  is the normal form of the expansion of $M$. A compilation of results for different algebraic properties of intersection types can be found in the conclusions. The full version of this paper with complete proofs can be found in \cite{AlvesF22}.
\section{Type Systems}

The type systems used here are defined for the the $\l$-calculus.
We will first recall the {\em Curry Type System} \cite{Curry:1934,CF:58} using a logic with explicit structural rules and lists of assumptions instead of the usual presentation without structural rules but using sets of assumptions.
From now on, in the rest of the paper, terms of the $\lambda$-calculus are considered module $\alpha$-equivalence and we assume that in a term $M$ no variable is bound more than once and no variable occurs both free and bound in $M$.
An infinite sequence of type-variables is assumed to be given. {\em Simple types} are expressions defined thus:
(1) each type-variable is a simple type; (2) if $\sigma$ and $\tau$ are simple types then $(\tau \rightarrow \sigma)$ is a simple type.
Type-variables are denoted by $\alpha, \beta$ and arbitrary types are denoted by $\tau, \sigma$. In both cases we may use or not number subscripts.
Parentheses will often be omitted from types, assuming that the arrow is right associative.
A finite list of pairs of the form $x:\tau$ (here called {\em assumptions}), where $x$ is a term variable and $\tau$ is a simple type, is {\em consistent} if and only if the term variables are all distinct.
A {\em basis} is a consistent finite list of pairs of the form $x:\tau$, where $x$ is a term variable and $\tau$ is a   simple type. The "," operator appends a pair to the end of the list. The list $(\Gamma_1, \Gamma_2)$ is
the list that results from appending $\Gamma_2$ onto the end of $\Gamma_1$.
We will use the notation $M:\sigma$ meaning that term $M$ has type $\sigma$ and $\Gamma \vdash M:\sigma$ to denote that $M:\sigma$ holds assuming the type declarations for free variables in the basis $\Gamma$.

The {\em Curry Type System} is defined by the following rules:
\newline

{\bf Axiom} and {\bf Structural Rules}:

\[
\inferrule{}{x:\tau \vdash x:\tau}\:{\sf (ax)}
\qquad
\inferrule{
\Gamma_1,\Gamma_2 \vdash  M:\sigma}
{\Gamma_1, x:\tau, \Gamma_2 \vdash  M:\sigma} \: {\sf (weak)}
\]
\qquad
\[
\inferrule{
\Gamma_1, x:\tau_1, y:\tau_2, \Gamma_2 \vdash  M:\sigma}
{\Gamma_1, y:\tau_2, x:\tau_1, \Gamma_2 \vdash  M:\sigma} \: {\sf (ex)}
\qquad
\inferrule{
\Gamma_1, x_1:\tau,x_2:\tau, \Gamma_2 \vdash  M:\sigma}
{\Gamma_1, x:\tau, \Gamma_2 \vdash  [x/x_1, x/x_2]M:\sigma} \: {\sf (ctr)}
\]
\newline

{\bf Logical Rules}:
\[
\inferrule{\Gamma, x:\tau_1 \vdash  M : \tau_2}{\Gamma \vdash  \l x.M : \tau_1 \rightarrow \tau_2} \:(\introarrow)
\qquad
\inferrule{\Gamma_1 \vdash  M : \tau \rightarrow \sigma  \qquad \Gamma_2 \vdash  N : \tau}{\Gamma_1 , \Gamma_2 \vdash  MN : \sigma}\;(\elimarrow)
\]

Let us describe informally the role of the structural rules.
The  Exchange rule ({\sf ex}) guarantees that the order in which we write  variables in the basis is
irrelevant. Weakening ({\sf weak}), indicates that we  may safely add extra (unneeded) assumptions  to the basis. Contraction ({\sf ctr}), states that if a term is typed using two identical assumptions then it is also typed using a single assumption.

 The lack of one or more of the structural rules leds to the definition of {\em substructural type systems}. 
There are four main substructural systems based in their logical counterparts: the {\em Relevant Type System} has only two structural rules (Exchange and Contraction). In this system assumptions are used {\em at least} once;
the {\em Affine Type System} has also two structural rules (Exchange and Weakening). In this system assumptions are used {\em at most} once; the {\em Linear Type System} has only the Exchange structural rule. In this system assumptions are used {\em exactly} once and finally, the {\em Ordered Type System} does not have any of the structural rules. In this system assumptions are used exactly once and in the {\em order} in which they are introduced in the type derivation.

\subsection{Relevant Types}

In the {\em Relevant Type System} every assumption in the basis is used to type a term. This is guaranteed by not using the Weakening type rule.
Thus the {\em Relevant Type System} corresponds to the Curry type system without the Weakening rule.

This substructural type system is related to the {\em $\lambda$I-calculus}, a restriction to the $\lambda$-calulus where in every term $M$, for each subterm of form $\lambda x.N$ in $M$, $x$ occurs free in $N$ at least once (in fact the $\lambda$I-terms were the terms that were originally studied by Church in \cite{Church:1940}).

\begin{restatable}{mylemma}{lemaa}
\label{FV}
If $\Gamma \vdash M:\tau$ is a type derivation on the {\em Relevant Type System} the set of term variables in $\Gamma$ is the set of free variables of $M$.
\end{restatable}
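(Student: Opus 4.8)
The plan is to proceed by structural induction on the derivation of $\Gamma \vdash M : \tau$, showing at each step that the set $\mathrm{dom}(\Gamma)$ of term variables declared in $\Gamma$ coincides with $FV(M)$. Since the Relevant Type System is exactly the Curry Type System with the weakening rule removed, every rule either leaves the relationship $\mathrm{dom}(\Gamma) = FV(M)$ invariant or modifies both sides in matching ways, and the absence of $({\sf weak})$ is precisely what prevents the basis from acquiring a variable that does not occur free in the subject.

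The base case is the axiom $x:\tau \vdash x:\tau$, where both $\mathrm{dom}(\Gamma)$ and $FV(x)$ equal $\{x\}$. The structural rule $({\sf ex})$ and the two logical rules then handle most of the inductive step. For $({\sf ex})$ neither the domain of the basis nor the subject changes, so the claim is immediate from the induction hypothesis. For the arrow-elimination rule, the conclusion $\Gamma_1, \Gamma_2 \vdash MN : \sigma$ has $\mathrm{dom}(\Gamma_1, \Gamma_2) = \mathrm{dom}(\Gamma_1) \cup \mathrm{dom}(\Gamma_2)$, which by the two induction hypotheses equals $FV(M) \cup FV(N) = FV(MN)$; consistency of the concatenated basis guarantees the basis is well formed but is not otherwise needed for the set equality.

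The two cases that carry the real content are arrow-introduction and contraction. For arrow-introduction, from $\Gamma, x:\tau_1 \vdash M : \tau_2$ the induction hypothesis gives $\mathrm{dom}(\Gamma) \cup \{x\} = FV(M)$, and since $\Gamma, x:\tau_1$ is a consistent basis we have $x \notin \mathrm{dom}(\Gamma)$; the key point, forced by relevance, is that $x$ genuinely belongs to $FV(M)$, so that $FV(\lambda x.M) = FV(M) \setminus \{x\} = \mathrm{dom}(\Gamma)$ exactly. It is here that dropping $({\sf weak})$ is essential, since otherwise $x$ could fail to occur in $M$ and the equality would break. For contraction, from $\Gamma_1, x_1:\tau, x_2:\tau, \Gamma_2 \vdash M : \sigma$ the induction hypothesis places both $x_1$ and $x_2$ in $FV(M)$, so applying the substitution $[x/x_1, x/x_2]$ yields $FV([x/x_1,x/x_2]M) = (FV(M) \setminus \{x_1, x_2\}) \cup \{x\}$, which matches the new domain $\mathrm{dom}(\Gamma_1, \Gamma_2) \cup \{x\}$.

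I expect the main obstacle to be the contraction case, where one must justify the free-variable computation under the simultaneous substitution; this relies on the standing convention that no variable is bound more than once and no variable occurs both bound and free in a term, which ensures $[x/x_1, x/x_2]$ causes no capture and behaves exactly as the naive set-theoretic computation suggests. The arrow-introduction case is conceptually the crux, as it is the single point where the relevance restriction, i.e.\ the missing weakening rule, is actually used.
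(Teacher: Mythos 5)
Your proof is correct and follows the same route as the paper, which simply states that the lemma holds ``by induction on the length of the type derivation'' without giving the case analysis. Your write-up fills in those cases accurately, correctly locating the role of the missing weakening rule in the arrow-introduction case and handling contraction via the variable conventions.
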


\begin{restatable}{thm}{theorema}
If a term $M$ is typed in the {\em Relevant Type System}, then $M$ is a $\lambda$I-term.
\end{restatable}

\subsection{Affine Types}

In the {\em Affine Type System} there is no Contraction rule. This guarantees that function parameters are used at most once. Thus the {\em Affine Type System} corresponds to the Curry type system without the Contraction rule.
The following subset of $\lambda$-terms is related to the set of terms typed in the Affine Type System.

\begin{definition}[Affine $\lambda$-terms]

An {\em affine $\lambda$-term} is a $\lambda$-term M such that:
\begin{enumerate}
\item for each subterm of $\lambda x.N$ of $M$, $x$ occurs free in $N$ at most once;
\item each free variable of $M$ has just one occurrence free in $M$.
\end{enumerate}

\end{definition}

\begin{example}
As simple examples of {\em affine} terms consider the terms $\lambda x.x$ and $\lambda x.y$. Simple examples of $\lambda$-terms which are not {\em affine} include $\lambda xyz.xz(yz)$ and $\lambda fx.f(fx)$.
\end{example}

The following theorems show that the set of terms typed in the {\em Affine Type System} is exactly the set of {\em affine terms}.

\begin{restatable}{mylemma}{FVA}
\label{FVA}
If $\Gamma \vdash M:\tau$ is a typing derivation in the {\em Affine Type System} then  the set of free variables of $M$ is included in the set of term variables in $\Gamma$.
\end{restatable}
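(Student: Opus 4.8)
The plan is to prove the inclusion by induction on the structure of the typing derivation $\Gamma \vdash M : \tau$ in the Affine Type System, which by definition consists of the rules $({\sf ax})$, $({\sf weak})$, $({\sf ex})$, $(\introarrow)$ and $(\elimarrow)$ --- that is, the Curry system with the $({\sf ctr})$ rule removed. Writing $FV(M)$ for the set of free variables of $M$ and $\mathrm{dom}(\Gamma)$ for the set of term variables occurring in $\Gamma$, the property to establish is $FV(M) \subseteq \mathrm{dom}(\Gamma)$, and I would check that it is preserved by each rule in turn.

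The base case is $({\sf ax})$, where $M \equiv x$, $\Gamma = x:\tau$, and $FV(M) = \{x\} = \mathrm{dom}(\Gamma)$, giving the inclusion (indeed equality). For the structural rules the argument is immediate: $({\sf ex})$ leaves both $M$ and $\mathrm{dom}(\Gamma)$ unchanged, so the induction hypothesis transfers directly, while $({\sf weak})$ enlarges the basis by one assumption $x:\tau$ without altering $M$, so $\mathrm{dom}$ only grows and the inductive inclusion is preserved. This case is exactly where the statement must be an inclusion rather than the equality proved for the relevant system in Lemma~\ref{FV}: weakening may introduce into $\Gamma$ a variable that does not occur free in $M$, which is precisely the effect of dropping the relevance constraint.

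The two logical rules are handled with the standard free-variable identities. For $(\elimarrow)$, from the premises $\Gamma_1 \vdash M:\tau\ra\sigma$ and $\Gamma_2 \vdash N:\tau$ the induction hypotheses give $FV(M)\subseteq\mathrm{dom}(\Gamma_1)$ and $FV(N)\subseteq\mathrm{dom}(\Gamma_2)$, whence $FV(MN) = FV(M)\cup FV(N) \subseteq \mathrm{dom}(\Gamma_1)\cup\mathrm{dom}(\Gamma_2) = \mathrm{dom}(\Gamma_1,\Gamma_2)$. For $(\introarrow)$, the premise $\Gamma, x:\tau_1 \vdash M:\tau_2$ yields $FV(M) \subseteq \mathrm{dom}(\Gamma)\cup\{x\}$ by the induction hypothesis, and since $FV(\lambda x.M) = FV(M)\setminus\{x\}$ we conclude $FV(\lambda x.M) \subseteq \mathrm{dom}(\Gamma)$.

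The argument carries no genuine difficulty; the only point requiring care is the abstraction case, where one must track that the bound variable $x$ is discharged from the free-variable set of the conclusion exactly as it is removed from the basis. The conventions fixed earlier --- that in $M$ no variable is bound more than once and no variable occurs both free and bound --- guarantee that this bookkeeping is unambiguous (no accidental capture or name clash), so that the inclusion propagates cleanly through $(\introarrow)$ and the induction closes.
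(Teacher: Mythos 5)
Your proof is correct and follows exactly the route the paper intends: the paper's proof is stated simply as ``straightforward induction on the length of the typing derivation,'' and your case analysis (axiom, structural rules, and the two logical rules, with the weakening case explaining why the statement is an inclusion rather than an equality) is precisely the unfolding of that induction.
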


\begin{restatable}{thm}{Affine}
A term $M$ is typed in the {\em Affine Type System}, if and only if $M$ is an {\em Affine $\lambda$-term}.
\end{restatable}

\subsection{Linear Types}
The {\em Linear Type System} corresponds to the implicational fragment of linear logic \cite{Girard:1987} confined with implication as its single connective.
In the Linear Type System each assumption must be used exactly once. This means that if $\Gamma \vdash M:\tau$ is a valid typing in the Linear Type System, then each term variable in $\Gamma$ occurs free exactly once in $M$. The Linear Type System does not have the Contraction rule, to guarantee that assumptions are used at most once, and the Weakening rule, meaning that assumptions are used exactly once.
We will follow the standard linear logic notation for functional linear types, written $\tau_1 \llto \tau_2$. 

As expected, the set of terms typed in the Linear Type System is exactly the set of {\em linear terms}.

\begin{definition}[Linear $\lambda$-terms]
A {\em linear $\lambda$-term} is a $\lambda$-term $M$ such that:
\begin{enumerate}
\item for each subterm of $\lambda x.N$ of $M$, $x$ occurs free in $N$ exactly once;
\item each free variable of $M$ has just one occurrence free in $M$.
\end{enumerate}
\end{definition}
Note that contracting a $\beta$-redex of a {\em linear term} strictly reduces the length of the term, thus linear terms have the interesting property that $\beta$-reductions starting at a linear term $M$ cannot have more contractions than the length of $M$. As a trivial consequence every linear $\lambda$-term has a normal form.

\begin{restatable}{mylemma}{FVL}
\label{FVL}
If $\Gamma \vdash M:\tau$ is a typing derivation in the {\em Linear Type System} then  the set of free variables of $M$ is equal to the set of term variables in $\Gamma$.
\end{restatable}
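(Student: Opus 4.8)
The plan is to prove the equality $\mathrm{FV}(M) = \mathrm{Dom}(\Gamma)$, where $\mathrm{FV}(M)$ denotes the set of free variables of $M$ and $\mathrm{Dom}(\Gamma)$ the set of term variables occurring in $\Gamma$, by structural induction on the typing derivation of $\Gamma \vdash M : \tau$. Since the Linear Type System has exactly four rules — the axiom, exchange, and the two logical rules for arrow introduction and elimination — there are four cases to treat, and I expect the two logical rules to carry all the content.

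In the base case $({\sf ax})$ we have $x:\tau \vdash x:\tau$, and indeed $\mathrm{FV}(x) = \{x\}$ equals the singleton set of term variables of $x:\tau$. The exchange rule $({\sf ex})$ leaves both the subject term $M$ and the underlying set of term variables unchanged — it only permutes the list — so the claim transfers immediately from the premise by the induction hypothesis.

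For arrow introduction we pass from $\Gamma, x:\tau_1 \vdash M : \tau_2$ to $\Gamma \vdash \lambda x.M : \tau_1 \to \tau_2$. The induction hypothesis gives $\mathrm{FV}(M) = \mathrm{Dom}(\Gamma) \cup \{x\}$; here it is essential that the absence of the weakening rule forces $x$ to genuinely occur among the free variables of $M$ rather than being an idle assumption. Using the convention that no variable is bound twice and that free and bound occurrences do not clash, we have $x \notin \mathrm{Dom}(\Gamma)$, whence $\mathrm{FV}(\lambda x.M) = \mathrm{FV}(M) \setminus \{x\} = \mathrm{Dom}(\Gamma)$, as required. For arrow elimination we combine $\Gamma_1 \vdash M : \tau \to \sigma$ and $\Gamma_2 \vdash N : \tau$ into $\Gamma_1, \Gamma_2 \vdash MN : \sigma$; applying the induction hypothesis to each premise and using $\mathrm{FV}(MN) = \mathrm{FV}(M) \cup \mathrm{FV}(N)$ together with $\mathrm{Dom}(\Gamma_1, \Gamma_2) = \mathrm{Dom}(\Gamma_1) \cup \mathrm{Dom}(\Gamma_2)$ closes the case.

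The argument is a routine induction, so the main obstacle is purely a matter of bookkeeping: one must check that the variable conventions indeed guarantee $x \notin \mathrm{Dom}(\Gamma)$ in the abstraction case, and that the concatenated basis $\Gamma_1, \Gamma_2$ in the elimination case remains consistent, i.e. has disjoint domains. It is worth stressing what does the work here. The inclusion $\mathrm{Dom}(\Gamma) \subseteq \mathrm{FV}(M)$ relies exactly on the lack of weakening, mirroring Lemma~\ref{FV} for the relevant system, while the reverse inclusion $\mathrm{FV}(M) \subseteq \mathrm{Dom}(\Gamma)$ holds for the same reason as in the affine case of Lemma~\ref{FVA}. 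The absence of contraction plays no role in this set-level statement; it would only become relevant for the stronger \emph{exactly once} counting property that distinguishes linear terms from relevant ones.
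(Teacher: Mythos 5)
Your proof is correct and follows the same route as the paper, which simply states that the result holds by straightforward induction on the typing derivation. Your case analysis (axiom, exchange, and the two arrow rules) and the observation that the absence of weakening yields the inclusion $\mathrm{Dom}(\Gamma) \subseteq \mathrm{FV}(M)$ are exactly the details the paper leaves implicit.
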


\begin{restatable}{thm}{LinearTerms}
A term $M$ is typed in the {\em Linear Type System}, if and only if $M$ is a {\em linear $\lambda$-term}.
\end{restatable}


\subsection{Ordered Types}
Many computational concepts are order sensitive (consider, for example, managing memory allocated on a stack). {\em Ordered Type Systems} provide a foundation for order sensitive computational problems. The central idea is that by avoiding the exchange rule, we are able to guarantee that program evaluation follows a pre-determined order. Ordered type systems are inspired by Lambek ordered logic \cite{Lambek:1958} which has several applications to natural language processing. Ordered logic was further developed by Polakow and Pfenning \cite{Polakow:1999}.

The {\em Ordered Type System} has no Contraction, thus it is linear, no Weakening, thus it is also a relevant system, and no Exchange, thus the order of use of assumptions matter. 
\begin{definition}
Let $\alpha$ range over an infinite set of type variables:
\begin{eqnarray*}
(\textbf{ordered-types})\  \tau    &::=&   \alpha \mid \tau_1 \llto_l \tau_2 \mid \tau_1 \llto_r \tau_2
\end{eqnarray*} 
\end{definition}
The definition of the {\em Ordered Type System} follows:
\newline

{\bf Axiom}:
\[
\inferrule{ }{x:\tau \vdash_o  x:\tau}\; (\ax)
\]

{\bf Logical Rules}:
\[
\inferrule{x:\tau_1,\Gamma \vdash_o  M : \tau_2}{\Gamma \vdash_o  \l x.M : \tau_1 \llto_l \tau_2} \:(\introarrowl)
\qquad
\inferrule{\Gamma,x:\tau_1 \vdash_o  M : \tau_2}{\Gamma \vdash_o  \l x.M : \tau_1 \llto_r \tau_2}\;(\introarrowr)
\]
\[
\inferrule{\Gamma_2 \vdash_o  N : \tau  \qquad \Gamma_1 \vdash_o  M : \tau \llto_l \sigma}{\Gamma_2 , \Gamma_1 \vdash_o  MN : \sigma}\;(\elimarrowl)
\qquad
\inferrule{\Gamma_1 \vdash_o  M : \tau \llto_r \sigma \qquad \Gamma_2 \vdash_o N : \tau}{\Gamma_1 , \Gamma_2 \vdash_o  MN : \sigma}\;(\elimarrowr)
\]
\newline
Note that the different arrow types guide type derivations to guarantee that the order of assumptions is used consistently during typing. For example:

\begin{example}
Consider the term $(\lambda x.xz_2)z_1$. The following two different typings are valid: 
\begin{enumerate}
    \item $z_1:\alpha \llto_r \beta, z_2:\alpha \vdash_o (\lambda x.xz_2)z_1:\beta$ 
    \item $z_2:\alpha, z_1:\alpha \llto_l \beta \vdash_o (\lambda x.xz_2)z_1:\beta$
\end{enumerate}
Note that if we change the order of the assumptions the typings are no longer valid, i.e.
\begin{enumerate}
    \item $z_2:\alpha, z_1:\alpha \llto_r \beta \not \vdash_o (\lambda x.xz_2)z_1:\beta$ 
    \item $z_1:\alpha \llto_l \beta , z_2:\alpha \not \vdash_o (\lambda x.xz_2)z_1:\beta$. 
\end{enumerate}    
\end{example}

\subsection{Intersection Types}

Intersection types originate in the works of Barendregt, Coppo and
Dezani \cite{CoppoD80,barendregt1983filter} and give us a characterization of the strongly normalizable terms. 
Consider the following example: in Intersection
Type Systems $\lambda x.xx$ has type
$(\alpha \cap \alpha \rightarrow \beta) \rightarrow \beta$. Note that the two (non-unifiable) types of the variable $x$ belong to the domain type of the abstraction linked by the intersection operator. A more
interesting example is the term $T \equiv (\lambda x.xx)I$, where $I$ is the
identity function $\lambda x.x$. This term has type $\alpha
\rightarrow \alpha$,  which does not involve intersections, although
it is not typable in the Curry Type System, because it has a
non-typable subterm.
Here we define an Intersection Type System where every type declared
in the environment is used in the type derivation, a property which is
going to be crucial in subsequent results. 

\begin{definition}
Let $\alpha$ range over an infinite set of type variables:
\begin{eqnarray*}
(\textbf{$\cap$-types})\  \sig    &::=&   \alpha  \mid \sig_1 \cap \dotsb \cap \sig_n \rightarrow \sig
\end{eqnarray*} 
\end{definition}

The original Coppo-Dezani Intersection Type System~\cite{CoppoD80} considers intersection $\cap$ as an associative, commutative and idempotent operator. There are other works which consider non idempotent intersections \cite{Kfoury00,FloridoD04,BucciarelliKV17}. 
To avoid ambiguities of notation we will use ACI-intersection to denote associative, commutative and idempotent intersections, AC-intersection to denote non-idempotent intersections and A-intersection to denote non-idempotent and non-commutative intersections. If we write only intersection we mean ACI-intersection.

\begin{definition}
A {\em type environment} is a finite set of pairs of the form $x:\tau_1 \cap \dotsb \cap \tau_n$, where $x$ is a term variable, $\tau_1 \dots \tau_n$ are  types, and the term variables are all distinct.
\end{definition}

\begin{definition}
Let $\Gamma_1$ and $\Gamma_2$ be two type environments. Then $\Gamma_1 \wedge \Gamma_2$ is the new environment given by $x:\sigma \in \Gamma_1 \wedge \Gamma_2$ if and only if $\sigma$ is defined thus
\[
\sigma = \left\{ \begin{array}{ll}
               \sigma_1 \cap \sigma_2& \mbox{if $x:\sigma_1 \in \Gamma_1$ and $x:\sigma_2 \in \Gamma_2$} \\
               \sigma_1 & \mbox{if $x:\sigma_1 \in \Gamma_1$ and $\neg
               \exists \sigma.x:\sigma \in \Gamma_2$} \\
               \sigma_2 & \mbox{if $x:\sigma_2 \in \Gamma_2$ and $\neg
               \exists \sigma.x:\sigma \in \Gamma_1$} 
            \end{array}
     \right. 
\]
\end{definition}  

The {\em Intersection Type System} used here is defined thus:
\[
\inferrule{}{\{x:\tau\} \vdash_{\cap} x:\tau}\;(\ax)
\]

\[
\inferrule{\Gamma \cup\{x:\tau_1 \cap \dotsb \cap \tau_n \} \vdash_{\cap}
M:\sigma}{\Gamma \vdash_{\cap} \lambda x.M:\tau_1 \cap \dotsb \cap \tau_n \rightarrow
\sigma}\;(\introarrow) \qquad
\inferrule{\Gamma \vdash_{\cap}
M:\sigma\quad x \not \in \fv{M}}{\Gamma \vdash_{\cap} \lambda x.M:\tau \rightarrow
\sigma}\;(\introarrows)\]

\[
\inferrule{\Gamma_0 \vdash_{\cap} M:\tau_1 \cap \dotsb \cap \tau_m \rightarrow
\sigma\qquad \left(\Gamma_i \vdash_{\cap} N:\tau_i\right)_{\iom} }{\Gamma_0 \wedge \Gamma_1 \wedge \dotsb \wedge\Gamma_m \vdash_{\cap} MN:\sigma}\;(\elimarrow)
\] 

The two different $\introarrow$ rules are necessary because in this system if there is a derivation of $\Gamma \vdash M:\sigma$ and $x$ does not occur free in $M$, then there is not a type declaration for $x$ in $\Gamma$.
The set of types for a given term $M$ in this system is strictly included in the set of types for $M$ in the original intersection type system of Coppo and Dezani \cite{CoppoD80}. For example the type $(\alpha_1 \cap \alpha_2) \rightarrow \alpha_1$ types $\lambda x.x$ in the Coppo-Dezani type system but not in the system used in this paper. The reason for this is that types in intersections for free variables can only be introduced with the $\elimarrow$ rule and thus each element of the intersection corresponds to a type that is actually used in the type derivation. However the set of terms typable in both systems is the same and corresponds to the strongly normalizable terms.

\begin{restatable}{thm}{sn}
\label{sn}
A $\lambda$-term $M$ is strongly normalizable (i.e. with no infinite reduction sequences starting from $M$) if and only if $M$ is typable in the intersection type system presented.
\end{restatable}

\section{Term Expansion}

In the following sections we will present the notion of {\em term expansion}, which generalises expansion as used in \cite{FloridoD04} to linearize the strongly normalizable terms. Under this notion we show that one can define terms with less sharing, but with the same computational properties of terms typable in an intersection type system.

Expansion consists of replacing occurrences of variable in a term, typed with different types, by a new variable typed with the corresponding types. This operation may involve other transformations in the term. For example if $x$ is expanded $k$ times in $(\lambda x.M)N$ then $N$ has to be copied $k$ times. However if the expansion is inside $N$ then $M$ may be changed, because possible arguments of $x$ may have to be copied. 

To define the expansion we face one key problem: the expansion of $MN$
is a term of the form $M_0N_1 \ldots N_k$ where $M_0$ is the expansion
of $M$ and $N_1 \ldots N_k$ are expansions of $N$. The problem here is
to find the right $k$. It is easy to determinate the number of new
arguments when $M$ is of the form $\lambda x.M'$ (just check how many fresh variables replace $x$), but if $M$ is itself an application this information depends on expansions made inside $M$. The best way to propagate this information is by using types. As we need to explicitly count the number of types of each function argument we use intersection types. If $M$ has type $\tau_1 \cap \ldots \cap \tau_k \rightarrow \sigma$ in the intersection type system we know that $MN$ will be expanded to a term of the form $M_0N_1 \ldots N_k$.

We will show that expansion relates terms typed by ACI-intersections with terms typed in the Curry Type System and the Relevant Type System, terms typed by AC-intersections with terms typed in the Affine and Linear Type Systems and terms typed by A-intersection types with terms typed in the Ordered Type System. This highlights a clear relation between algebraic properties of intersection types and the substructural rules: idempotent intersection is related with the contraction rule and commutative intersection with the exchange rule.

\subsection{From Intersection Types to Simple Types: ACI-Expansion}
\label{chapter8}
In a previous work \cite{FloridoD04} we related terms typed by non-idempotent intersections with the affine $\lambda$-calculus.
In this section we extend the results of \cite{FloridoD04} to relate terms typed by idempotent intersections with the Curry and the Relevant Type Systems.
We will define the new notion of {\em ACI-expansion} of a $\lambda$-term. 
Let us first formalize the expansion of free variables:
\begin{definition}
A {\em variable expansion} is an expression of the form
$$x:S$$
where $x$ is a variable and $S$ is a set of pairs of the form $y:\tau$ where $y$ is a variable and $\tau$ an intersection type. ($x:S$ should be read informally as ``$x$ expands to the variables in $S$''.)
\end{definition}

\begin{definition}
An {\em expansion context} $A$ is any finite set of variable expansions
$$A = \{x_1:S_1, \ldots ,x_n:S_n\}$$
where the variables $\{x_1 \ldots x_n\}$ are all different and the $S_i$ are disjoint.
\end{definition}

We now define an operation which appends two expansion contexts.
\begin{definition}

Let $A_1$ and $A_2$ be two expansion contexts. Then $A_1 \uplus A_2$ is a new context such that $x:S \in A_1 \uplus A_2$ if and only if
\[
S = \left\{ \begin{array}{ll}
               S_1 \cup S_2 & \mbox{if $x:S_1 \in A_1$ and $x:S_2 \in A_2$} \\
               S_1 & \mbox{if $x:S_1 \in A_1$ and $\neg \exists S. x:S \in A_2$} \\
               S_2 & \mbox{if $x:S_2 \in A_2$ and $\neg \exists S. x:S \in A_1$} 
            \end{array}
     \right. 
\]
\end{definition}
From now on when we write $A \uplus \{x:S\}$ we assume that $x$ does not
occur in $A$. 
We are now able to formalize the notion of term expansion.
\begin{definition}
Given a pair $M:\sigma$, where $M$ is a term and $\sigma$ an ACI-intersection type, a term $N$ and an expansion context $A$ we define a relation ${\cal E}_I(M:\sigma) \lhd (N,A)$ called {\em ACI-expansion}. If $A$ is empty we shall write just ${\cal E}_I(M:\sigma) \lhd N$. 
Expansion is defined by:
\[
\begin{array}{rll}
  {\cal E}_I(x:\tau) & \lhd & (y,\{x:\{y:\tau\}\}) \\ 
                   &   & \: \: \: \: \mbox{if $x \neq y$} \\ 
  {\cal E}_I(\lambda x.M:\tau_1 \cap \dotsb \cap \tau_n \rightarrow \sigma) & \lhd & (\lambda x_1 \ldots x_n.M^*,A) \\ 
  &   &  \: \: \: \: \mbox{if $x$ occurs in $M$ and} \\ 
  &   &  \: \: \: \: {\cal E}_I(M:\sigma) \lhd (M^*,A \cup \{x:\{x_1:\tau_1, \dots ,x_n:\tau_n\}\})  \\ 
{\cal E}_I(\lambda x.M:\tau \rightarrow \sigma) & \lhd & (\lambda y.M^*,A) \\ 
  &  & \: \: \: \: \mbox{if  $x$ does not occur in $M$,} \\ 
  &  & \: \: \: \: \mbox{$y$ is a fresh variable and} \\
  &  & \: \: \: \:  {\cal E}_I(M:\sigma) \lhd (M^*,A)  \\ 
 {\cal E}_I(MN:\sigma) & \lhd & (M_0N_1 \dots N_k, A_0 \uplus A_1 \uplus \dotsb \uplus A_n) \\ 
  &  & \: \: \: \: \mbox{if for some $k>0$ and $\tau_1, \ldots \tau_k$,} \\
  &  & \: \: \: \: {\cal E}_I(M:\tau_1 \cap \dotsb \cap \tau_k \rightarrow \sigma) \lhd (M_0,A_0) \mbox{ and} \\ 
  &  & \: \: \: \: {\cal E}_I(N:\tau_i) \lhd (N_i,A_i), (1 \leq i \leq k)
\end{array}
\]
\end{definition}
From now on if ${\cal E}_I(M:\sigma) \lhd (N,A)$ we will refer to $N$ as
an expanded version of $M$. We will sometimes omit $A$ when it is empty. 
We will now present an illustrating example.

\begin{example}
\label{ex1}
Let $I \equiv \lambda x.x$ and $M \equiv \lambda x.xx$. Let us show step by step how to calculate an expansion of $(MI:\alpha \rightarrow \alpha)$:
$${\cal E}_I(x:(\alpha \rightarrow \alpha) \rightarrow (\alpha \rightarrow \alpha)) \lhd (x_1,\{x:\{x_1:(\alpha \rightarrow \alpha) \rightarrow (\alpha \rightarrow \alpha)\}\})$$
and
$${\cal E}_I(x:\alpha \rightarrow \alpha) \lhd (x_2,\{x:\{x_2:\alpha \rightarrow \alpha\}\})$$
thus
$${\cal E}_I(xx:\alpha \rightarrow \alpha) \lhd (x_1x_2,\{x:\{x_1:(\alpha \rightarrow \alpha) \rightarrow (\alpha \rightarrow \alpha),x_2:\alpha \rightarrow \alpha\}\})$$
and
$${\cal E}_I(\lambda x.xx:(((\alpha \rightarrow \alpha) \rightarrow (\alpha \rightarrow \alpha)) \cap (\alpha \rightarrow \alpha)) \rightarrow \alpha \rightarrow \alpha) \lhd \lambda x_1 x_2.x_1 x_2$$
It easy to show that
$${\cal E}_I(I:\alpha \rightarrow \alpha) \lhd I$$
and
$${\cal E}_I(I:(\alpha \rightarrow \alpha) \rightarrow (\alpha \rightarrow \alpha)) \lhd I$$
thus
$${\cal E}_I(((\lambda x.xx)I):\alpha \rightarrow \alpha) \lhd (\lambda
x_1x_2.x_1x_2)II$$
Note that if
$${\cal E}_I(xx:\alpha \rightarrow \alpha) \lhd (x_1x_2,\{x:\{x_1:(\alpha
\rightarrow \alpha) \rightarrow (\alpha \rightarrow \alpha),x_2:\alpha
\rightarrow \alpha\}\})$$
it is also true that
$${\cal E}_I(xx:\alpha \rightarrow \alpha) \lhd (x_1x_2,\{x:\{x_2:\alpha
\rightarrow \alpha,x_1:(\alpha
\rightarrow \alpha) \rightarrow (\alpha \rightarrow \alpha)\}\})$$
because $\{x_1:(\alpha
\rightarrow \alpha) \rightarrow (\alpha \rightarrow \alpha),x_2:\alpha
\rightarrow \alpha\}$ is a set and thus there is not a fixed order
among its elements.
Thus we also have
$${\cal E}_I(\lambda x.xx:((\alpha \rightarrow \alpha) \cap ((\alpha \rightarrow \alpha) \rightarrow
(\alpha \rightarrow \alpha))))
\rightarrow \alpha \rightarrow \alpha) \lhd \lambda x_2 x_1.x_1 x_2$$
and consequently
$${\cal E}_I(((\lambda x.xx)I):\alpha \rightarrow \alpha) \lhd (\lambda
x_2x_1.x_1x_2)II$$
\end{example}
Note that the result of ACI-expansion is a term typable in the Curry Type System, not necessarily linear. For example the expansion of $\lambda f x.f(f x)$ using type $(\alpha \ra \alpha) \ra \alpha \ra \alpha$ is the term  $\lambda f_1 x_1.f_1(f_1 x_1)$.

We now show that terms that we can expand are exactly the terms
typable in an Intersection Type System i.e. the strongly normalizable terms. 
Let us first define two functions which transform expansion contexts in type environments and vice versa. 

\begin{definition}
Let $\Gamma$ be a type environment and  $\{x_1, \ldots, x_n\}$ be fresh term variables. Then $e(\Gamma)$ is the expansion context defined thus:
$$e(\Gamma) = \{x:\{x_1:\tau_1, \ldots, x_n:\tau_n\} \mid x:\tau_1 \cap \dotsb \cap \tau_n \in \Gamma \}$$
\end{definition}

\begin{definition}
Let $A$ be an expansion context. Then $l(A)$ is the type environment defined thus:
$$l(A) = \{x:\tau_1 \cap \dotsb \cap \tau_n \mid x:\{x_1:\tau_1, \ldots, x_n:\tau_n\} \in A\}$$
\end{definition}

\begin{restatable}{mylemma}{diste}
\label{dist_e}
Let $\Gamma_1$ and $\Gamma_2$ be type environments. Then
$$e(\Gamma_1) \uplus e(\Gamma_2) = e(\Gamma_1 \wedge \Gamma_2)$$
\end{restatable}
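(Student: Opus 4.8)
The plan is to prove the equality of the two expansion contexts by a two-stage argument: first showing they bind exactly the same term variables, and then checking that each bound variable is associated with the same set of pairs. The first stage is immediate from the shape of the definitions. The map $e$ preserves domains, since by definition $e(\Gamma)$ contains a binding for $x$ precisely when $\Gamma$ does; and both $\uplus$ and $\wedge$ are defined so that the domain of the result is the union of the domains of the two arguments. Hence $\mathrm{dom}(e(\Gamma_1)\uplus e(\Gamma_2)) = \mathrm{dom}(\Gamma_1)\cup\mathrm{dom}(\Gamma_2) = \mathrm{dom}(e(\Gamma_1\wedge\Gamma_2))$, and the problem reduces to verifying, for each such $x$, that the set $S$ assigned to $x$ agrees on both sides.

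For the pointwise check I would do a case analysis following the clauses of $\wedge$ (which match those of $\uplus$). If $x:\sigma_1\in\Gamma_1$ and $x$ does not occur in $\Gamma_2$, then $x:\sigma_1\in\Gamma_1\wedge\Gamma_2$, so $e(\Gamma_1\wedge\Gamma_2)$ assigns to $x$ exactly the set $e$ produces from $\sigma_1$; on the other side $e(\Gamma_1)$ assigns that same set and $e(\Gamma_2)$ leaves $x$ unbound, so the relevant clause of $\uplus$ returns it unchanged. The symmetric situation where $x$ occurs only in $\Gamma_2$ is handled identically. These cases are routine bookkeeping.

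The interesting case, and the one I expect to be the main obstacle, is when $x:\sigma_1\in\Gamma_1$ and $x:\sigma_2\in\Gamma_2$, say with $\sigma_1 = \tau_1\cap\dotsb\cap\tau_m$ and $\sigma_2 = \tau_{m+1}\cap\dotsb\cap\tau_{m+p}$. Then $x:\sigma_1\cap\sigma_2\in\Gamma_1\wedge\Gamma_2$, so $e(\Gamma_1\wedge\Gamma_2)$ binds $x$ to a set of $m+p$ freshly-named pairs carrying the types $\tau_1,\dotsb,\tau_{m+p}$, whereas $e(\Gamma_1)\uplus e(\Gamma_2)$ binds $x$ to $S_1\cup S_2$, where $S_1$ carries $\tau_1,\dotsb,\tau_m$ and $S_2$ carries $\tau_{m+1},\dotsb,\tau_{m+p}$. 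The difficulty is that $e$ introduces \emph{fresh} variables, so for these two sets to be literally equal one must ensure the fresh names chosen for the $m+p$ conjuncts of $\sigma_1\cap\sigma_2$ coincide with those chosen for the $m$ conjuncts of $\sigma_1$ together with the $p$ conjuncts of $\sigma_2$.

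I would discharge this by pinning down the freshness convention: rather than choosing the fresh names arbitrarily, let the name assigned by $e$ to a given conjunct depend functionally on $x$ together with the position of that conjunct, so that the same name is generated on both sides; equivalently, one reads the statement up to a consistent renaming of fresh variables, which is harmless because an expansion context is only determined up to such renaming in the first place. With this convention in force, $S_1\cup S_2$ is exactly the set $e$ yields on $\sigma_1\cap\sigma_2$, and the case closes. As a side benefit the disjointness condition required for $A_1\uplus A_2$ to be a legitimate expansion context holds automatically here, since distinct conjuncts receive distinct fresh names; this should be noted to confirm that the right-hand side is well-formed. Together the three cases give the claimed identity.
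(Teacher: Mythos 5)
Your proof is correct and follows the same route as the paper, whose entire proof is the one-liner ``by the definitions of $\wedge$ and $\uplus$''; your case analysis is exactly the expected unfolding of that argument. The freshness issue you flag is real and is silently glossed over in the paper, so your explicit naming convention (or reading the identity up to a consistent renaming of the fresh variables) is a worthwhile clarification rather than a deviation.
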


\begin{restatable}{mylemma}{distl}
\label{dist_l}
Let $A_1$ and $A_2$ be two expansion contexts. Then
$$l(A_1) \wedge l(A_2) = l(A_1 \uplus A_2)$$
\end{restatable}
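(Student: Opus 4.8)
The plan is to prove the set equality by extensionality on type environments: I would show that for every term variable $x$ and every $\cap$-type $\sigma$, the pair $x:\sigma$ belongs to $l(A_1) \wedge l(A_2)$ exactly when it belongs to $l(A_1 \uplus A_2)$. The point is that both $\uplus$ and $\wedge$ are defined by the very same three-way split according to whether $x$ occurs in the domain of the left context only, the right context only, or both. So the argument amounts to pushing the function $l$ through each of these three cases and checking that the definitions line up.

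First I would record the auxiliary fact that $l$ preserves domains: a variable $x$ occurs in $l(A)$ if and only if it occurs in $A$, which is immediate from the definition of $l$. Combined with the analogous domain behaviour of $\uplus$ (a variable is in $A_1 \uplus A_2$ iff it is in $A_1$ or $A_2$) and of $\wedge$, this reduces the problem to comparing the type assigned to $x$ in each of the three cases. The two symmetric cases where $x$ occurs in only one of $A_1, A_2$ are immediate: there $\uplus$ and $\wedge$ both simply carry over the single entry, and $l$ translates that entry identically on both sides.

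The interesting case is when $x$ occurs in both. Suppose $x:S_1 \in A_1$ with $S_1 = \{y_1:\tau_1, \ldots, y_m:\tau_m\}$ and $x:S_2 \in A_2$ with $S_2 = \{z_1:\rho_1, \ldots, z_p:\rho_p\}$. By definition of $l$, the environment $l(A_1)$ contains $x:\tau_1 \cap \dotsb \cap \tau_m$ and $l(A_2)$ contains $x:\rho_1 \cap \dotsb \cap \rho_p$, so the first clause of $\wedge$ puts $x:(\tau_1 \cap \dotsb \cap \tau_m) \cap (\rho_1 \cap \dotsb \cap \rho_p)$ into $l(A_1) \wedge l(A_2)$. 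On the other side, $A_1 \uplus A_2$ contains $x:S_1 \cup S_2$, and applying $l$ yields $x$ paired with the intersection of all types occurring in $S_1 \cup S_2$. I would then argue these two intersection types coincide, which is exactly where the algebraic hypotheses enter.

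The main obstacle is precisely this reconciliation of a set-indexed construction with a syntactic intersection, and it is where care is needed rather than heavy calculation. Since $S$ is a set, $l$ is only well defined modulo associativity, commutativity and idempotence: writing $\tau_1 \cap \dotsb \cap \tau_m$ reads off $S_1$ in some arbitrary order, and distinct orders give ACI-equivalent types. The key fact to invoke is that $S_1$ and $S_2$ carry disjoint sets of (fresh) variables — guaranteed by the definition of an expansion context and preserved by $\uplus$ — so the set union $S_1 \cup S_2$ introduces no coincidences and its associated intersection is, up to ACI, exactly the concatenation $(\tau_1 \cap \dotsb \cap \tau_m) \cap (\rho_1 \cap \dotsb \cap \rho_p)$. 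Commutativity absorbs the arbitrary ordering, associativity the regrouping, and idempotence is what makes $l$ a well-defined function into ACI-types in the first place. The reverse inclusion is identical, closing the case. (An alternative route would chain the previous Lemma~\ref{dist_e} with the fact that $l$ and $e$ are mutually inverse up to renaming, but since $e$ introduces fresh variables this only gives equality up to renaming, so the direct extensionality argument is cleaner.)
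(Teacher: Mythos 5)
Your proposal is correct and follows essentially the same route as the paper, whose proof is simply ``by the definitions of $\wedge$ and $\uplus$''; you have spelled out the three-way case split and the ACI reconciliation of the set union $S_1 \cup S_2$ with the syntactic intersection that this one-liner leaves implicit. The extra care you take about disjointness of the $S_i$ and well-definedness of $l$ modulo ACI is sound and consistent with the paper's conventions.
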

We will now proceed with some auxiliary lemmas before presenting the main theorem.
\begin{restatable}{mylemma}{times}
\label{times}
Let ${\cal E}_I(M:\sigma) \lhd (N,A \uplus \{x:\{x_1:\tau_1, \ldots,
x_k:\tau_k\}\})$. Then the number of free occurrences of $x$ in $M$ is greater or equal to $k$.
\end{restatable}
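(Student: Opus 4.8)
The plan is to prove the statement by structural induction on the definition of the ACI-expansion relation ${\cal E}_I(M:\sigma)\lhd(N,B)$, that is, by induction on the term $M$ guided by the recursive clauses that generate the expansion. To obtain a usable hypothesis I would first generalise the claim to: \emph{for every variable $x$ with $x:S\in B$, the number of free occurrences of $x$ in $M$ is at least $|S|$}. The lemma is then the instance in which $B=A\uplus\{x:\{x_1:\tau_1,\ldots,x_k:\tau_k\}\}$, since by the standing convention $x$ does not occur in $A$, so $S=\{x_1:\tau_1,\ldots,x_k:\tau_k\}$ and $|S|=k$.

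The base case is the variable clause ${\cal E}_I(z:\tau)\lhd(y,\{z:\{y:\tau\}\})$: the only variable in the domain of the context is $z$, its associated set is the singleton $\{y:\tau\}$, and $z$ occurs free exactly once in $M\equiv z$, so the inequality holds with equality. The two abstraction clauses are handled uniformly. For ${\cal E}_I(\lambda z.P:\tau_1\cap\dotsb\cap\tau_n\rightarrow\sigma)\lhd(\lambda z_1\ldots z_n.P^{*},A)$, the resulting context $A$ is obtained from the body expansion ${\cal E}_I(P:\sigma)\lhd(P^{*},A\cup\{z:\{z_1:\tau_1,\ldots,z_n:\tau_n\}\})$ by discarding the binding for the bound variable $z$. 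Hence any $x:S\in A$ has $x\neq z$, so by the induction hypothesis $x$ has at least $|S|$ free occurrences in $P$, and since $x\neq z$ these coincide with its free occurrences in $\lambda z.P$. The clause where $z$ does not occur in $P$ is identical, applying the induction hypothesis directly to the unchanged context.

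The application clause is the crux, and the step I expect to be the main obstacle. Here $M\equiv PQ$, the expanded context is $B=A_0\uplus A_1\uplus\dotsb\uplus A_k$, and for a variable $x$ the operation $\uplus$ \emph{unions} the sets $S_0,S_1,\ldots,S_k$ that $x$ receives from the expansion of the function $P$ and from each of the $k$ expansions of the \emph{same} argument $Q$. The induction hypothesis yields that $x$ occurs at least $|S_0|$ times in $P$ and at least $|S_i|$ times in $Q$ for each individual $i$, whereas the free occurrences of $x$ in $PQ$ are only those in $P$ together with those in the single copy of $Q$. The difficulty is therefore to control how fresh variables for $x$ accumulate across the $k$ copies of $Q$: one cannot simply sum the per-copy bounds, and the argument must exploit the idempotent (ACI) discipline under which expanding the same source variable at the same type reuses the same fresh name, so that the contributions of the copies of $Q$ are forced to coincide rather than to pile up. Making this precise — pinning down exactly when distinct fresh names are unavoidable and exhibiting, for each such name, a distinct free occurrence of $x$ in $M$ — is the technical heart of the proof. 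An alternative route I would try in parallel is to first establish the correspondence that $l(A)$ types $M$ in the intersection system whenever ${\cal E}_I(M:\sigma)\lhd(N,A)$, and then reduce the statement to an occurrence-counting property of intersection typings, reading the merges off via Lemmas \ref{dist_e} and \ref{dist_l}.
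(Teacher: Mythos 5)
Your overall strategy coincides with the paper's (the paper proves this lemma by structural induction on $M$ and gives no further detail), and your treatment of the variable and abstraction clauses is correct, including the generalisation to every $x:S$ in the context. The proof is nevertheless incomplete exactly where you say it is: the application clause is never actually carried out. You correctly observe that the induction hypothesis only yields $\mathrm{occ}_x(P)\geq|S_0|$ and $\mathrm{occ}_x(Q)\geq|S_i|$ for each copy $i$ separately, while the resulting context records $S_0\cup S_1\cup\dotsb\cup S_m$, and you defer the reconciliation to an unproved claim that the ACI discipline forces the contributions of the copies of $Q$ to coincide. That claim does not follow from the definitions in the paper, and labelling it ``the technical heart'' does not discharge it.

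Moreover, the obstruction you flag is not one that name-reuse can remove. Take $M\equiv zx$ with $\tau_1\neq\tau_2$: the variable clause gives ${\cal E}_I(z:\tau_1\cap\tau_2\rightarrow\sigma)\lhd(z_1,\{z:\{z_1:\tau_1\cap\tau_2\rightarrow\sigma\}\})$ and ${\cal E}_I(x:\tau_i)\lhd(y_i,\{x:\{y_i:\tau_i\}\})$ for $i=1,2$, so the application clause produces a context in which $x$ is sent to $\{y_1:\tau_1,y_2:\tau_2\}$, i.e.\ $k=2$, even though $x$ occurs free only once in $zx$; reusing a single fresh name cannot merge the two pairs because their types differ. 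Hence the invariant you propose (at least $|S|$ free occurrences for every $x:S$ in the context) is not preserved by the application clause as the relation is defined, and the same instance defeats your alternative route through Lemma~\ref{type_exp2}, since $\{z:\tau_1\cap\tau_2\rightarrow\sigma,\ x:\tau_1\cap\tau_2\}\vdash_{\cap} zx:\sigma$ is derivable with a single occurrence of $x$. To close the argument you would need either an additional side condition restricting how the same argument may be expanded at several types, or a reformulated counting invariant; as written, the proposal does not establish the lemma, and your suspicion that the application case is where the difficulty lies is well founded.
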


\begin{restatable}{thm}{typeexpone}
\label{theorem:type_exp1}
Let $M$ be a $\lambda$-term such that there is an environment $\Gamma$ and an intersection type $\sigma$ such that $\Gamma \vdash_{\cap} M:\sigma$. Then there is a term $N$ such that ${\cal E}_I(M:\sigma) \lhd (N,e(\Gamma))$.
\end{restatable}

\begin{restatable}{mylemma}{lemmaexptwo}
\label{type_exp2}
Let $M$ be a $\lambda$-term such that there is an expansion context $A$, an intersection type $\sigma$, and a term $N$ such that ${\cal E}_I(M:\sigma) \lhd (N,A)$. Then $l(A) \vdash_\cap M:\sigma$.
\end{restatable}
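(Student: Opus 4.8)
The statement is the converse direction of Theorem~\ref{theorem:type_exp1}, and the natural strategy is a structural induction on the derivation of the expansion relation ${\cal E}_I(M:\sigma) \lhd (N,A)$. Since the four defining clauses of ACI-expansion are selected by the syntactic shape of $M$, this is equivalent to an induction on the structure of $M$. The guiding observation is that each expansion clause is the mirror image of a rule of the intersection type system, so the induction effectively reconstructs a typing derivation for $M$ while translating the expansion context $A$ into the type environment $l(A)$. The term $N$ itself plays no role in the conclusion and is only carried along by the recursion.

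For the base case $M \equiv x$, the only applicable clause forces $A = \{x:\{y:\tau\}\}$ and $\sigma = \tau$, so $l(A) = \{x:\tau\}$ and the goal is exactly the axiom $(\ax)$. For $M \equiv \lambda x.M'$ with $x$ occurring in $M'$, we have $\sigma = \tau_1 \cap \dotsb \cap \tau_n \rightarrow \sigma'$ and the expansion recurses on ${\cal E}_I(M':\sigma') \lhd (M^*, A \cup \{x:\{x_1:\tau_1,\ldots,x_n:\tau_n\}\})$. The induction hypothesis yields a typing of $M'$ from that context; because an expansion context has pairwise distinct term variables, $x$ does not occur in $A$, so $l$ of it is precisely $l(A) \cup \{x:\tau_1 \cap \dotsb \cap \tau_n\}$, and one application of $(\introarrow)$ gives $l(A) \vdash_\cap \lambda x.M':\sigma$. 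When $x$ does not occur in $M'$, the induction hypothesis gives $l(A) \vdash_\cap M':\sigma'$ directly and, since $x \notin \fv{M'}$, rule $(\introarrows)$ closes the case.

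The application case is where the real work lies and is the step I expect to be the main obstacle. From ${\cal E}_I(MN:\sigma) \lhd (M_0 N_1 \cdots N_k,\ A_0 \uplus A_1 \uplus \dotsb \uplus A_k)$ the induction hypothesis gives $l(A_0) \vdash_\cap M:\tau_1 \cap \dotsb \cap \tau_k \rightarrow \sigma$ and $l(A_i) \vdash_\cap N:\tau_i$ for $1 \le i \le k$, and rule $(\elimarrow)$ derives $l(A_0) \wedge l(A_1) \wedge \dotsb \wedge l(A_k) \vdash_\cap MN:\sigma$. What remains is to reconcile the two different environment-combining operations: the typing rule merges via $\wedge$ whereas expansion merges via $\uplus$. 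This is exactly the content of Lemma~\ref{dist_l}, whose iteration over $A_0,\ldots,A_k$ rewrites $l(A_0) \wedge \dotsb \wedge l(A_k)$ as $l(A_0 \uplus \dotsb \uplus A_k)$, matching the context recorded by the expansion and completing the case. Beyond this bookkeeping, the only points needing care are that the occurrence side conditions separating $(\introarrow)$ from $(\introarrows)$ coincide with the side conditions guarding the two abstraction clauses of the expansion, so that the correct rule is always available.
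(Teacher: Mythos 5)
Your proposal is correct and follows essentially the same route as the paper's proof: structural induction on $M$ (equivalently, on the expansion derivation), with the axiom, the two abstraction rules, and $(\elimarrow)$ handling each case, and with Lemma~\ref{dist_l} reconciling $\uplus$ with $\wedge$ in the application case exactly as the paper does. No gaps.
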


\begin{restatable}{thm}{expsn}
\label{exp_sn}
Let $M$ be a $\lambda$-term. Then $M$ is strongly normalizable if and only if there are a term $N$, an expansion context $A$ and a type $\sigma$ such that ${\cal E}_I(M:\sigma) \lhd (N,A)$.
\end{restatable}

\subsubsection{ACI-Expansion and the Curry Type System}

In \cite{BucciarelliLPS99} a translation from intersection types to simple types
was given and used to show that derivations in an intersection type system with idempotent intersections can be transformed into terms typed in the Curry Type System. Here we show that our definition of ACI-expansion also preserves this translation. In fact, let ${\cal T}$ be the translation from intersection types to simple types defined in \cite{BucciarelliLPS99}. Then, if $M$ is typable in the intersection type system with type $\sigma$, and ${\cal E}_I(M:\sigma) \lhd (N,A)$ then $N$ is typable in the Curry Type System with type ${\cal T}(\sigma)$.

\begin{definition}
${\cal T}$ is a translation from intersection types to simple types defined by:
\begin{enumerate}
\item ${\cal T}(\alpha) = \alpha$, if $\alpha$ is a type variable;
\item ${\cal T}((\tau_1 \cap \dotsb \cap \tau_n) \rightarrow \sigma)$ $=$ ${\cal T}(\tau_1) \rightarrow \dotsb \rightarrow {\cal T}(\tau_n) \rightarrow {\cal T}(\sigma)$.
\end{enumerate}
\end{definition}
${\cal T}$ will be used later in the paper also for similar functions applied to linear and ordered types. Their use is clear  in each context.
The previous definition can be extended to expansion contexts:
\begin{definition}
Let ${\cal T}_e$ be a translation from expansion contexts to bases defined thus:
\begin{enumerate}
\item ${\cal T}_e(\emptyset) = \emptyset$;
\item ${\cal T}_e(A \cup \{x:\{x_1:\tau_1, \ldots, x_n:\tau_n\}\}) = {\cal T}_e(A) \cup \{x_1:{\cal T}(\tau_1), \ldots, x_n:{\cal T}(\tau_n)\}$.
\end{enumerate}
\end{definition}

\begin{restatable}{thm}{exptypes}
\label{exp_types}
Let ${\cal E}_I(M:\sigma) \lhd (N,A)$. Then ${\cal T}_e(A) \vdash_C N:{\cal T}(\sigma)$, where $\vdash_C$ stands for type derivation in the Curry Type System.

\end{restatable}

\begin{restatable}{thm}{InterCurry}

Let $M$ be a $\lambda$-term such that $\Gamma \vdash_{\cap} M:\sigma$ in the Intersection Type System. Then there is a basis $\Gamma_C$ and a term $N$ such that $\Gamma_C \vdash_C N:{\cal T}(\sigma)$, where $\vdash_C$ stands for type derivation in the Curry Type System.

\end{restatable}

This theorem has, as a corollary, that if a term $M$ is typable in the intersection type system with a simple type, then there is an expanded term with the same type derivable in the Curry type system. Just notice that ${\cal T}(\sigma) = \sigma$ when $\sigma$ is a simple type.

\subsubsection{Weak Head Reduction}

In this section we show that ACI-expansion is preserved by a notion of
reduction that is used in the implementation of functional
programming languages: weak head reduction. This guarantees that the
weak head normal form of a term $M$ has an expanded version, which is a weak head normal form of an expanded version of $M$.

We first present one Lemma that is going to be used in the study of the
preservation of expansion by reduction. 

\begin{restatable}{mylemma}{subsub}
\label{sub_sub}
Let ${\cal E}_I(M:\sigma) \lhd (M_0,A_0 \uplus \{x:\{x_1:\tau_1, \ldots,
x_k:\tau_k\}\})$ and ${\cal E}_I(N:\tau_i) \lhd (N_i,A_i)$ for $i \in
\{1,\ldots,k\}$. Then ${\cal E}_I(M[N/x]:\sigma) \lhd (M_0[N_1/x_1,\ldots,N_k/x_k],A_0 \uplus \dotsb \uplus A_k)$
\end{restatable}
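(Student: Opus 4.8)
The plan is to argue by induction on the derivation of $\mathcal{E}_I(M:\sigma)\lhd(M_0,A_0\uplus\{x:\{x_1:\tau_1,\ldots,x_k:\tau_k\}\})$, case-splitting on the last clause of the definition of ACI-expansion, under the standing conventions that $x_1,\ldots,x_k$ and all variables introduced while expanding $N$ are fresh and disjoint from the bound variables of $M_0$ (so that the simultaneous substitution is well behaved and commutes with the term decompositions below). The base case is the variable clause: since the displayed context carries a genuine $x$-entry, $M$ can only be $x$ itself, the applicable instance is $\mathcal{E}_I(x:\sigma)\lhd(y,\{x:\{y:\sigma\}\})$, and this forces $A_0=\emptyset$, $k=1$, $x_1=y$ and $\tau_1=\sigma$. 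Then $M[N/x]=N$ and $M_0[N_1/x_1]=y[N_1/y]=N_1$, so the goal reduces to the hypothesis $\mathcal{E}_I(N:\sigma)\lhd(N_1,A_1)$ together with $A_0\uplus A_1=A_1$.

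For the two abstraction clauses, $M=\lambda z.P$ with $z\neq x$ and $z\notin\fv{N}$ by the conventions. In both clauses the sub-expansion of $P$ runs with a context of the form $A'\uplus\{x:\{x_1:\tau_1,\ldots,x_k:\tau_k\}\}$, where $A'$ is $A_0$ possibly enlarged by the entry $\{z:\{z_1:\rho_1,\ldots\}\}$ for the bound variable; since $z$ is fresh this regrouping is legitimate. Applying the induction hypothesis to $P$ (with the same data for $N$) yields an expansion of $P[N/x]$, and reapplying the matching abstraction clause rebuilds the expansion of $(\lambda z.P)[N/x]=\lambda z.(P[N/x])$; here one uses that $z$ occurs in $P$ iff it occurs in $P[N/x]$, because $z\neq x$ and $z\notin\fv{N}$. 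One also checks that the $z$-entry survives the $\uplus$ with the $A_i$, which holds because $z$ does not occur in any $A_i$, so the resulting context is $A_0\uplus A_1\uplus\cdots\uplus A_k$ as required.

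The application clause is the heart of the argument. Here $M=PQ$ with $\mathcal{E}_I(P:\rho_1\cap\cdots\cap\rho_m\rightarrow\sigma)\lhd(P_0,B_0)$, $\mathcal{E}_I(Q:\rho_j)\lhd(Q_j,B_j)$ for $1\le j\le m$, $M_0=P_0Q_1\cdots Q_m$, and $B_0\uplus B_1\uplus\cdots\uplus B_m=A_0\uplus\{x:\{x_1:\tau_1,\ldots,x_k:\tau_k\}\}$. The combined $x$-entry is the set-union of the $x$-entries of the $B_j$, so each expansion variable $x_i$ occurs in at least one $B_j$; writing $I_j\subseteq\{1,\ldots,k\}$ for the indices appearing in $B_j$, the $I_j$ cover $\{1,\ldots,k\}$ (they partition it when no variable is shared across sub-expansions, and may overlap when idempotency lets one fresh variable stand for equally typed occurrences). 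Since each $x_i$ carries a fixed type $\tau_i$, the hypotheses $\mathcal{E}_I(N:\tau_i)\lhd(N_i,A_i)$ restrict coherently to each sub-expansion, so the induction hypothesis applies to $P$ and to every $Q_j$, producing expansions of $P[N/x]$ and each $Q_j[N/x]$ with contexts $B_0^{-x}\uplus\biguplus_{i\in I_0}A_i$ and $B_j^{-x}\uplus\biguplus_{i\in I_j}A_i$, where $B_j^{-x}$ is $B_j$ with its $x$-entry deleted. Reassembling with the application clause gives an expansion of $P[N/x]\,Q[N/x]=(PQ)[N/x]$. Two verifications close the case: first, since the $x_i$ occur at disjoint positions of $M_0$, the local substitutions glue into the single simultaneous substitution $M_0[N_1/x_1,\ldots,N_k/x_k]$; second, using associativity, commutativity and idempotency of $\uplus$ together with $B_0^{-x}\uplus\cdots\uplus B_m^{-x}=A_0$, the combined context collapses to $A_0\uplus A_1\uplus\cdots\uplus A_k$.

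The main difficulty is thus entirely in the application clause: keeping precise track of how the fresh expansion variables $x_1,\ldots,x_k$ are distributed (or shared) across the head and the $m$ copies of the argument, and checking that both the simultaneous substitution and the $\uplus$-combination of contexts recombine exactly. The distribution law for $\uplus$ (Lemma~\ref{dist_l}) and the elementary algebraic laws of $\uplus$ handle the routine bookkeeping; the only genuinely delicate point is the interaction between idempotency, which may identify equally typed occurrences, and the freshness conventions, which is precisely what guarantees that the same expansion $N_i$ of $N$ can be reused for a shared variable without disturbing the final context.
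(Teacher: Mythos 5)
Your proposal is correct and follows essentially the same route as the paper's proof: structural induction on $M$ with the variable, abstraction and application cases, decomposing the combined expansion context in the application case according to which expansion variables $x_i$ land in the head versus each argument copy, and reassembling via the $\uplus$-laws. Your explicit attention to the possible overlap of the index sets caused by idempotency is a point the paper's proof handles only implicitly (via its use of $\cup$ when splitting $\{N_1,\ldots,N_k\}$), but it does not change the argument.
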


Functional language compilers \cite{Jones87} consider only weak-head reduction and stop evaluation when a weak-head normal form (a constant or a $\lambda$-abstraction) is reached. Weak-head normal forms are sufficient because printable results only belong to basic domains.
The following definition of {\em weak head reduction} appears in \cite{Fradet94}:
\begin{definition}
{\em Weak head reduction} $\underset{w}{\rightarrow}$ is defined by:
$$(\lambda x.M)N \underset{w}{\rightarrow} M[N/x]$$
and
$$
\frac{{\textstyle M \underset{w}{\rightarrow} M'}}{{\textstyle MN
    \underset{w}{\rightarrow} M'N}} 
$$
We denote by $\underset{w}{\twoheadrightarrow}$ the reflexive and transitive closure of $\underset{w}{\rightarrow}$. Closed weak head normal forms are abstractions $\lambda x.M$.
\end{definition}

We first define an inclusion relation between expansion contexts as follows:
\begin{definition}
Let $A_1$ and $A_2$ be two expansion contexts. $A_1 \sqsubseteq A_2$
if and only if:
$$ x:S_1 \in A_1 \Rightarrow x:S_2 \in A_2 \mbox{ and } S_1 \subseteq S_2.$$
\end{definition}  
We will now show that ACI-Expansion preserves weak head reduction in the sense that the following diagram commutes: 


\centerline{
\xymatrix{
   M_1 \ar[r]_w \ar[d]_{{\cal E}_I}  &  M_2 \ar[d]^{{\cal E}_I}  \\
   N_1 \ar@{->>}[r]_w           &  N_2}
}

For this we need some auxiliary lemmas.

\begin{restatable}{mylemma}{sub}
Let $(\lambda x.M)N$ be a redex in the $\lambda$-calculus. Let ${\cal E}_I((\lambda x.M)N:\sigma) \lhd (N_1,A_1)$
Then there is a term $N_2$ such that ${\cal E}_I(M[N/x]:\sigma) \lhd (N_2,A_2)$, $A_2 \sqsubseteq A_1$ and $N_1 \underset{\beta}{\twoheadrightarrow} N_2$.
\label{sub}
\end{restatable}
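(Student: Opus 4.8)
The plan is to prove Lemma~\ref{sub} by showing that ACI-expansion commutes with $\beta$-reduction at the top-level redex, reducing the statement to the substitution Lemma~\ref{sub_sub} already established. First I would analyse how an expansion of $(\lambda x.M)N$ must be structured. By the definition of ACI-expansion applied to an application, we have ${\cal E}_I((\lambda x.M)N:\sigma) \lhd (M_0 N_1 \ldots N_k, A_0 \uplus A_1 \uplus \dotsb \uplus A_k)$ for some $k > 0$ and types $\tau_1, \ldots, \tau_k$, where ${\cal E}_I(\lambda x.M : \tau_1 \cap \dotsb \cap \tau_k \rightarrow \sigma) \lhd (M_0, A_0)$ and ${\cal E}_I(N : \tau_i) \lhd (N_i, A_i)$ for each $i$. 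The crucial next move is to unfold the expansion of the abstraction $\lambda x.M$. There are two cases depending on whether $x$ occurs in $M$, corresponding to the two abstraction clauses of the definition.

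In the principal case where $x$ occurs in $M$, the abstraction clause forces $M_0 = \lambda x_1 \ldots x_k.M^*$ with ${\cal E}_I(M:\sigma) \lhd (M^*, A_0 \cup \{x:\{x_1:\tau_1, \ldots, x_k:\tau_k\}\})$. Hence the expanded term is $N_1 \equiv (\lambda x_1 \ldots x_k.M^*) N_1 \ldots N_k$, which $\beta$-reduces in $k$ steps to $M^*[N_1/x_1, \ldots, N_k/x_k]$. I would then set $A_0' = A_0$ (renaming the free-variable expansion context appropriately) and invoke Lemma~\ref{sub_sub} directly: with ${\cal E}_I(M:\sigma) \lhd (M^*, A_0 \uplus \{x:\{x_1:\tau_1, \ldots, x_k:\tau_k\}\})$ and ${\cal E}_I(N:\tau_i) \lhd (N_i, A_i)$, that lemma yields ${\cal E}_I(M[N/x]:\sigma) \lhd (M^*[N_1/x_1, \ldots, N_k/x_k], A_0 \uplus A_1 \uplus \dotsb \uplus A_k)$. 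Taking $N_2 \equiv M^*[N_1/x_1, \ldots, N_k/x_k]$ and $A_2 = A_0 \uplus A_1 \uplus \dotsb \uplus A_k$ gives exactly the required expansion of $M[N/x]$, and the multi-step $\beta$-reduction $N_1 \twoheadrightarrow_\beta N_2$ witnesses the reduction clause.

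For the containment $A_2 \sqsubseteq A_1$, I would compare $A_1 = A_0 \uplus A_1 \uplus \dotsb \uplus A_k$ (the original context, which also carries the variable expansion $x:\{x_1:\tau_1,\ldots,x_k:\tau_k\}$ absorbed into $A_0$ via the abstraction clause) against $A_2$. Here I expect the subtlety: the original expansion context contains the binding expansion for $x$ that gets consumed when the abstraction is formed, whereas after substitution the corresponding free-variable information flows into the $A_i$. I would argue that every variable expansion appearing in $A_2$ already appears in $A_1$ with at least as large a set $S$, since substitution can only merge or preserve the expansion data recorded for the free variables of $M$ and $N$, never introduce genuinely new expansions. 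The degenerate case where $x$ does not occur in $M$ is handled separately but more easily: then $M_0 = \lambda y.M^*$ with $y$ fresh and $k$ effectively $1$, the $N_i$ are discarded under substitution, and the containment is immediate.

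The main obstacle I anticipate is the careful bookkeeping of the expansion contexts to establish $A_2 \sqsubseteq A_1$ rather than equality. Substitution merges the contexts of the several copies of $N$, and one must check that this merging, combined with the removal of the now-bound $x$-expansion, can only shrink or preserve each recorded set $S$ of expanded variables — never forcing a variable expansion present in $A_2$ to exceed what $A_1$ records. The distributivity Lemma~\ref{dist_l} together with the disjointness condition built into the definition of expansion contexts should make this routine once the case analysis on the occurrence of $x$ is set up correctly, but it is the step requiring the most attention to the precise definitions of $\uplus$ and $\sqsubseteq$.
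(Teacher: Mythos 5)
Your proposal is correct and follows essentially the same route as the paper: unfold the application and abstraction clauses of ${\cal E}_I$, split on whether $x$ occurs free in $M$, apply the substitution Lemma~\ref{sub_sub} in the principal case to obtain $N_2 \equiv M^*[N_1/x_1,\ldots,N_k/x_k]$ with context $A_0 \uplus A_1 \uplus \dotsb \uplus A_k$, and handle the vacuous case directly, where the strict containment $A_0 \sqsubseteq A_0 \uplus A_1$ arises because the discarded argument's context drops out. The only point where you overcomplicate is the containment in the principal case: there the two contexts coincide exactly (the $x$-expansion is already removed by the abstraction clause before the contexts are merged), so no further bookkeeping is needed.
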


\begin{restatable}{thm}{tfive}
\label{tfive}
Let ${\cal E}_I(M_1:\sigma) \lhd (N_1,A_1)$ and $M_1
\underset{w}{\rightarrow} M_2$. Then there is a term $N_2$ such that
${\cal E}_I(M_2:\sigma) \lhd (N_2,A_2)$, $N_1 \underset{w}{\twoheadrightarrow}
N_2$ and $A_2 \sqsubseteq A_1$.
\end{restatable}

\begin{definition}
Let $t$ and $u$ be $w$-reductions starting, respectively, by $M_0$ and $N_0$:
$$t:M_0 \underset{w}{\rightarrow} M_1 \underset{w}{\rightarrow} M_2 \underset{w}{\rightarrow} \cdots$$
$$u:N_0 \underset{w}{\twoheadrightarrow} N_1 \underset{w}{\twoheadrightarrow} N_2 \underset{w}{\twoheadrightarrow} \cdots$$
We say that $u$ is an {\em expansion} of $t$ if there are expansion contexts $A_0, \ldots, A_k$ and a type $\sigma$ such that:
\begin{enumerate}
\item $A_0 \sqsupseteq A_1 \sqsupseteq A_2 \sqsupseteq \cdots$,
\item ${\cal E}_I(M_i:\sigma) \lhd (N_i,A_i)$ for $i \geq 0$.
\end{enumerate}
\end{definition}

The following corollary of Theorem \ref{tfive} makes explicit the simple fact that every finite $w$-reduction can be expanded. It holds trivially by successive applications of Theorem \ref{tfive} to every $w$-reduction step in $t$.

\begin{corollary}[of Theorem \ref{tfive}]
Every finite $w$-reduction $t$, can be expanded to another $w$-reduction (not necessarily unique).
\end{corollary}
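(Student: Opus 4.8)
The plan is to prove the statement by induction on the length $n$ of the finite $w$-reduction $t: M_0 \underset{w}{\rightarrow} M_1 \underset{w}{\rightarrow} \cdots \underset{w}{\rightarrow} M_n$, extending the expansion one $w$-step at a time by Theorem \ref{tfive}. The construction presupposes an expansion of the source term: since the reduction is one we actually wish to expand, $M_0$ is expandable, and so (by Theorem \ref{exp_sn}, equivalently because $M_0$ is strongly normalizable) there are a type $\sigma$, a term $N_0$ and an expansion context $A_0$ with ${\cal E}_I(M_0:\sigma) \lhd (N_0,A_0)$. This $N_0$ becomes the first term of the expanded reduction $u$, and the type $\sigma$ is fixed once and for all; the definition of reduction-expansion demands that this same $\sigma$ witness the expansion of every $M_i$, so it is essential that the later steps not change it.

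For the inductive step, suppose the prefix $M_0 \underset{w}{\rightarrow} \cdots \underset{w}{\rightarrow} M_{n-1}$ has already been expanded, yielding terms $N_0, \ldots, N_{n-1}$ and contexts $A_0 \sqsupseteq \cdots \sqsupseteq A_{n-1}$ with ${\cal E}_I(M_i:\sigma) \lhd (N_i,A_i)$ for $0 \le i \le n-1$. In particular ${\cal E}_I(M_{n-1}:\sigma) \lhd (N_{n-1},A_{n-1})$, so I apply Theorem \ref{tfive} to the single step $M_{n-1} \underset{w}{\rightarrow} M_n$ and obtain a term $N_n$ and a context $A_n$ with ${\cal E}_I(M_n:\sigma) \lhd (N_n,A_n)$, $N_{n-1} \underset{w}{\twoheadrightarrow} N_n$ and $A_n \sqsubseteq A_{n-1}$. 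Appending this to the already-constructed reduction gives $u: N_0 \underset{w}{\twoheadrightarrow} \cdots \underset{w}{\twoheadrightarrow} N_n$, after which it remains only to verify the two clauses in the definition of expansion of a reduction.

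Both clauses are then immediate. The inclusion chain $A_0 \sqsupseteq A_1 \sqsupseteq \cdots \sqsupseteq A_n$ assembles from the pointwise inclusions $A_{i+1} \sqsubseteq A_i$ produced at each application of Theorem \ref{tfive}, and the second clause, ${\cal E}_I(M_i:\sigma) \lhd (N_i,A_i)$ for every $i$, holds by construction with the single fixed $\sigma$ inherited from the base case. Because each theorem step returns $N_i \underset{w}{\twoheadrightarrow} N_{i+1}$ under the reflexive--transitive closure, the concatenation is a legitimate $w$-reduction; the fact that an individual step may be empty is precisely why $u$ is written with $\underset{w}{\twoheadrightarrow}$ rather than $\underset{w}{\rightarrow}$.

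I do not expect a genuine obstacle, which matches the authors' remark that the result holds ``trivially''. The two points requiring care are pure bookkeeping: first, that the type $\sigma$ is invariant along the whole reduction --- this is guaranteed because Theorem \ref{tfive} carries the same $\sigma$ from premise to conclusion, i.e. it is subject reduction for expansion; and second, that $u$ is not claimed unique, since each invocation of Theorem \ref{tfive}, like the initial choice of $N_0$, may admit several outcomes (for instance the ACI reordering of expanded variables illustrated in Example \ref{ex1}). Thus ``not necessarily unique'' is not a weakness of the construction but a faithful record of these choices.
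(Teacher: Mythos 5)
Your proof is correct and follows exactly the paper's argument: the paper's own proof is the single remark that the corollary ``holds trivially by successive applications of Theorem \ref{tfive} to every $w$-reduction step in $t$,'' which is precisely your induction on the length of $t$. Your additional bookkeeping (obtaining the initial expansion of $M_0$ via Theorem \ref{exp_sn}, fixing $\sigma$ once, and chaining the inclusions $A_{i+1} \sqsubseteq A_i$) is a faithful elaboration of the same route, not a different one.
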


We saw that expansion is preserved by weak head reduction. This does
not happen with $\beta$-reduction. 
In fact we may have $M_1
\underset{\beta}{\rightarrow} M_2$, ${\cal E}_I(M_1:\sigma) \lhd
(N_1,A_1)$ and there is not a type $\tau$ such that ${\cal
  E}_I(M_2:\tau) \lhd (N_2,A_2)$ and $N_1
\underset{\beta}{\twoheadrightarrow} N_2$. Note that there is an expanded
version, $P$, of $M_2$ (because $M_1$ is strongly normalizable thus
$M_2$ is also strongly normalizable and thus, by Theorem \ref{exp_sn}, it
has an expanded version). The point here is that $N_1 \not
\underset{\beta}{\twoheadrightarrow} P$ for no expanded version $P$ of $M_2$.
To see this let $M_1 \equiv \lambda x.(\lambda y.z)xx$ and $M_2 \equiv \lambda x.zx$.
We have:
$$\lambda x.(\lambda y.z)xx \underset{\beta}{\rightarrow}  
\lambda x.zx$$
$${\cal E}(\lambda x.(\lambda y.z)xx:\alpha_1 \cap
  \alpha_2 \rightarrow \beta) \lhd (\lambda x_1x_2.(\lambda
  y_1.z_1)x_1x_2,\{z:\{z_1:\alpha_2 \rightarrow \beta\}\})$$
and
$$ \lambda x_1x_2.(\lambda y_1.z_1)x_1x_2
\underset{\beta}{\rightarrow} \lambda x_1x_2.z_1x_2$$
Now note that, as $x$ occurs in $zx$ once, it follows from Lemma
\ref{times} that any expansion of $\lambda x.zx$ is of the form
$\lambda x_1.M$ where $M$ is one expansion of $zx$. Thus $\lambda
x_1x_2.z_1x_2$ cannot be an expansion of $\lambda x.zx$ for any
type. 
If preservation of expansion by $\beta$-reduction is not viewed as a
goal by itself, then the lack of
this property is not a problem, because it holds for a notion of
reduction that is used for functional programming languages. 

\subsubsection{ACI-Expansion and the Relevant Type System}

Here we will study ACI-expansion applied only to $\lambda$I-terms. Note that it is the same relation, ${\cal E}_I$, defined in the previous section, but we now restrict its domain to the set of $\lambda$I-terms. Thus the same symbol ${\cal E}_I$ will be used, overloaded, in this section to evoke this analogy.

We recall that the $\lambda I$-calculus is a restriction of the $\lambda$-calculus where in terms of the form $\lambda x.M$, $x$ occurs free in $M$.

We now show that terms in the range of ${\cal E}_I$, when its domain is the $\lambda I$-calculus, are typed in the Relevant Type System.

\begin{restatable}{thm}{exptypesRelevant}
\label{exp_types_Relevant}
Let $M$ be a $\lambda I$-term such that ${\cal E}_I(M:\sigma) \lhd (N,A)$. Then ${\cal T}_e(A) \vdash_R N:{\cal T}(\sigma)$, where $\vdash_R$ stands for type derivation in the Relevant Type System.
\end{restatable}

\begin{restatable}{thm}{Relevant}
Let $M$ be a $\lambda I$-term such that $\Gamma \vdash_{\cap} M:\sigma$ in the Intersection Type System. Then there is a basis $\Gamma_R$ and a term $N$ such that $\Gamma_R \vdash_R N:{\cal T}(\sigma)$, where $\vdash_R$ stands for type derivation in the Relevant Type System.

\end{restatable}

This theorem has, as a corollary, that if a $\lambda I$-term $M$ is typable in the Intersection Type System with a simple type, then there is an expanded term with the same type derivable in the Relevant Type System. Just notice that ${\cal T}(\sigma) = \sigma$ when $\sigma$ is a Curry type.

\subsubsection{Reduction}

We now show that $\beta$-reduction is preserved by ACI-expansion for the $\lambda I$-calculus, where erasing is not allowed. This means that for the $\lambda I$-calculus the following diagram commutes:

\centerline{
\xymatrix{
   M_1 \ar[r]_\beta \ar[d]_{{\cal E}_I}  &  M_2 \ar[d]^{{\cal E}_I}  \\
   N_1 \ar@{->>}[r]_\beta           &  N_2}
}

\begin{restatable}{mylemma}{SubI}
Let $(\lambda x.M)N$ be a redex in the $\lambda I$-calculus. Let
${\cal E}_I((\lambda x.M)N:\sigma) \lhd (N_1,A)$.
Then there is a term $N_2$ such that
${\cal E}_I(M[N/x]:\sigma) \lhd (N_2,A)$
and $N_1 \underset{\beta}{\twoheadrightarrow} N_2$.
\label{subI}
\end{restatable}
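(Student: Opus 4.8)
The goal is to prove Lemma~\ref{subI}: given a $\lambda I$-redex $(\lambda x.M)N$ with ${\cal E}_I((\lambda x.M)N:\sigma) \lhd (N_1,A)$, there is a term $N_2$ with ${\cal E}_I(M[N/x]:\sigma) \lhd (N_2,A)$ and $N_1 \underset{\beta}{\twoheadrightarrow} N_2$. This is the $\lambda I$-analogue of Lemma~\ref{sub}, so the natural plan is to follow that proof but track the extra invariant that no expansion context shrinks (the $A_2 = A$ rather than $A_2 \sqsubseteq A_1$), which is exactly what the $\lambda I$-restriction buys us.

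The plan is to start by unfolding the definition of ACI-expansion on the application $(\lambda x.M)N$. By the application clause, ${\cal E}_I((\lambda x.M)N:\sigma) \lhd (N_1,A)$ means that for some $k>0$ and types $\tau_1,\dots,\tau_k$ we have ${\cal E}_I(\lambda x.M:\tau_1 \cap \dotsb \cap \tau_k \rightarrow \sigma) \lhd (M_0,A_0)$, together with ${\cal E}_I(N:\tau_i) \lhd (N_i,A_i)$ for each $1 \leq i \leq k$, where $N_1 \equiv M_0 N_1' \dots N_k'$ (renaming to avoid clashing with the lemma's $N_1$) and $A = A_0 \uplus A_1 \uplus \dotsb \uplus A_k$. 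Since we are in the $\lambda I$-calculus, $x$ occurs free in $M$, so the abstraction was expanded by the first $\introarrow$ clause: $M_0 \equiv \lambda x_1 \dots x_k.M^*$ with ${\cal E}_I(M:\sigma) \lhd (M^*, A_0' \cup \{x:\{x_1:\tau_1,\dots,x_k:\tau_k\}\})$ for the appropriate $A_0'$. Then $N_1 \equiv (\lambda x_1 \dots x_k.M^*)N_1' \dots N_k'$, which $\beta$-reduces in $k$ steps to $M^*[N_1'/x_1,\dots,N_k'/x_k]$.

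The heart of the argument is to apply the substitution lemma, Lemma~\ref{sub_sub}, to the data ${\cal E}_I(M:\sigma) \lhd (M^*, A_0' \uplus \{x:\{x_1:\tau_1,\dots,x_k:\tau_k\}\})$ and ${\cal E}_I(N:\tau_i) \lhd (N_i',A_i)$. This yields directly ${\cal E}_I(M[N/x]:\sigma) \lhd (M^*[N_1'/x_1,\dots,N_k'/x_k], A_0' \uplus A_1 \uplus \dotsb \uplus A_k)$. Setting $N_2 \equiv M^*[N_1'/x_1,\dots,N_k'/x_k]$, the reduction $N_1 \underset{\beta}{\twoheadrightarrow} N_2$ is exactly the $k$-step contraction noted above, and the expansion context $A_0' \uplus A_1 \uplus \dotsb \uplus A_k$ equals the original $A$ because the binding $x:\{\dots\}$ is consumed by the abstraction clause rather than being discarded. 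This last point is where the $\lambda I$-restriction is essential: $x$ genuinely occurs in $M$, so by Lemma~\ref{times} all $k$ of the fresh variables $x_1,\dots,x_k$ actually appear and the substitution touches every argument, so none of the $N_i'$ is thrown away and the contexts combine exactly rather than merely being contained.

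The main obstacle I expect is purely bookkeeping: reconciling the shape of the expansion context $A = A_0 \uplus \dotsb \uplus A_k$ coming from the application clause with the context $A_0' \uplus \{x:\{x_1:\tau_1,\dots,x_k:\tau_k\}\}$ on which Lemma~\ref{sub_sub} operates, and verifying that after the substitution the binding for $x$ has been correctly absorbed so that the output context is literally $A$, with no loss. One must check carefully that $A_0$ is precisely $A_0'$ (the residual part of the abstraction's context) and that the $\uplus$ operations associate and commute as needed; this is routine given Lemma~\ref{dist_l} but requires attention to the freshness convention that $x$ does not occur in $A_0'$. Once the contexts are matched, the reduction and expansion claims fall out immediately from Lemma~\ref{sub_sub} and the definition of $\beta$-reduction.
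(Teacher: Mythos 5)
Your proposal is correct and follows essentially the same route as the paper: the paper's proof simply observes that the argument is identical to the $x \in FV(M)$ case of Lemma~\ref{sub}, i.e.\ unfold the application and abstraction clauses, apply Lemma~\ref{sub_sub}, and note that the resulting context is exactly $A_0 \uplus A_1 \uplus \dotsb \uplus A_k = A$. Your extra care about matching $A_0$ with $A_0'$ and the role of the $\lambda I$-restriction is sound bookkeeping that the paper leaves implicit.
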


\begin{restatable}{thm}{red}
\label{red}
Let $M_1$ and $M_2$ be two terms in the $\lambda I$-calculus. Let
${\cal E}_I(M_1:\sigma) \lhd (N_1,A)$ and $M_1
\underset{\beta}{\rightarrow} M_2$. Then there is a term $N_2$ such
that ${\cal E}(M_2:\sigma) \lhd (N_2,A)$ and $N_1
 \underset{\beta}{\twoheadrightarrow} N_2$. 
\end{restatable}

\subsection{From Intersection Types to Linear Types: AC-Expansion}

AC-Expansion was defined in \cite{FloridoD04} to linearize the strongly normalizable terms. AC-expansion relies on the use of non idempotent intersection types, which give us a one-to-one relation between the number of types in an intersection and the number of occurrences of a formal parameter $x$ in a function $\lambda x.M$. This means that expanded terms will be affine or linear terms (depending on the range of expansion). 
Thus the definition of AC-Expansion (which can be found in \cite{FloridoD04}) is similar to ACI-Expansion, using non idempotent intersections and the following different expansion rule for variables:

\[
\begin{array}{rll}
  {\cal E}_C(x:\tau) & \lhd & (y,\{x:\{y:\tau\}\}) \\ 
                   &   & \: \: \: \: \mbox{if $x$ is a variable and $y$ is a fresh variable}
\end{array}  
\]
Proofs of theorems in this subsection can be found in \cite{FloridoD04}.
\begin{example}
Let $I \equiv \lambda x.x$ and $M \equiv (\lambda f.f(\lambda x.xx)(fI))I$
Then
$${\cal E}_C(M:\alpha \rightarrow \alpha) \lhd (\lambda f_1f_2f_3.f_1(\lambda x_1x_2.x_1x_2)(f_2I)(f_3I))III$$
\end{example}
Notice in this example the use of type information to control the number of expansions. In $$(\lambda f.f(\lambda x.xx)(fI))I$$ the fact that $f$ is going to be the identity function gives $f$ three different types, one for the identity function applied to $\lambda x.xx$, and two more types, one for each type in the intersection in the argument type of $f(\lambda x.xx)$. These three types give rise to the three new expansion variables $f_1$, $f_2$ and $f_3$. The intersection of two types in the argument type of $f(\lambda x.xx)$ gives rise to the two new terms $(f_2I)$ and $(f_3I)$.

From now on, to stress that expanded versions are affine or linear, when we have ${\cal E}_C(M:\sigma) \lhd (N,A)$ we will refer to $N$ as
one linear version of $M$.

\subsubsection{AC-Expansion and the Affine Type System}

\begin{definition}
${\cal T}$ is a translation from intersection types to linear types defined by:
\begin{enumerate}
\item ${\cal T}(\alpha) = \alpha$, if $\alpha$ is a type variable;
\item ${\cal T}((\tau_1 \cap \dotsb \cap \tau_n) \rightarrow \sigma)$ $=$ ${\cal T}(\tau_1) \llto \dotsb \llto {\cal T}(\tau_n) \llto {\cal T}(\sigma)$.
\end{enumerate}
\end{definition}

\begin{theorem}
\label{exp_typesAffine}
Let ${\cal E}_C(M:\sigma) \lhd (N,A)$. Then ${\cal T}_e(A) \vdash_A N:{\cal T}(\sigma)$, where $\vdash_A$ stands for type derivation in the Affine Type System.

\end{theorem}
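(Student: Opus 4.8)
The plan is to proceed by structural induction on the derivation of ${\cal E}_C(M:\sigma) \lhd (N,A)$, following the four clauses defining AC-expansion and mirroring the argument used for Theorem~\ref{exp_types} in the Curry case, but now exploiting non-idempotency to land in the Affine system. First I would record a small auxiliary observation: because the variable clause of AC-expansion allocates a globally fresh variable $y$ for every occurrence, all the expansion variables occurring in the sets $S_i$ of a single derivation are pairwise distinct; consequently, whenever two expansion contexts $A_1, A_2$ are merged with $\uplus$, the sets in each $x:S$ are disjoint, and the translated basis ${\cal T}_e(A_1 \uplus A_2)$ is, up to the Exchange rule, just the concatenation ${\cal T}_e(A_1), {\cal T}_e(A_2)$ of two bases with disjoint variable domains. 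This disjointness is the crucial point: it means the combined basis is always consistent and that Contraction is never needed, which is exactly why the result is typable in the Affine system and not merely in the Curry system.

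With that in hand, the induction runs as follows. In the variable case ${\cal E}_C(x:\tau) \lhd (y, \{x:\{y:\tau\}\})$ we have ${\cal T}_e(A) = \{y:{\cal T}(\tau)\}$, and the axiom gives $\{y:{\cal T}(\tau)\} \vdash_A y:{\cal T}(\tau)$ directly. For the abstraction clause where $x$ occurs, the premise ${\cal E}_C(M:\sigma) \lhd (M^*, A \cup \{x:\{x_1:\tau_1,\dots,x_n:\tau_n\}\})$ gives, by induction, ${\cal T}_e(A), x_1:{\cal T}(\tau_1), \dots, x_n:{\cal T}(\tau_n) \vdash_A M^* : {\cal T}(\sigma)$, with the $x_i$ sitting at the end of the basis; applying $(\introarrow)$ successively to $x_n, x_{n-1}, \dots, x_1$, each abstracted variable being the last one in the current basis, yields ${\cal T}_e(A) \vdash_A \lambda x_1 \dots x_n.M^* : {\cal T}(\tau_1) \llto \dots \llto {\cal T}(\tau_n) \llto {\cal T}(\sigma)$, and this target type is precisely ${\cal T}((\tau_1 \cap \dots \cap \tau_n) \rightarrow \sigma)$ by definition of ${\cal T}$.

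The abstraction clause where $x$ does not occur is where affineness (as opposed to linearity) is used: from the induction hypothesis ${\cal T}_e(A) \vdash_A M^* : {\cal T}(\sigma)$ and the freshness of $y$ (so $y$ is neither in the domain of ${\cal T}_e(A)$ nor free in $M^*$), I would apply the Weakening rule to obtain ${\cal T}_e(A), y:{\cal T}(\tau) \vdash_A M^* : {\cal T}(\sigma)$, and then $(\introarrow)$ to conclude ${\cal T}_e(A) \vdash_A \lambda y.M^* : {\cal T}(\tau) \llto {\cal T}(\sigma)$. In the application clause, the induction hypotheses give ${\cal T}_e(A_0) \vdash_A M_0 : {\cal T}(\tau_1) \llto \dots \llto {\cal T}(\tau_k) \llto {\cal T}(\sigma)$ and ${\cal T}_e(A_i) \vdash_A N_i : {\cal T}(\tau_i)$; applying $(\elimarrow)$ $k$ times, threading one argument at a time, produces ${\cal T}_e(A_0), {\cal T}_e(A_1), \dots, {\cal T}_e(A_k) \vdash_A M_0 N_1 \dots N_k : {\cal T}(\sigma)$, and by the auxiliary observation this concatenated basis coincides, modulo Exchange, with ${\cal T}_e(A_0 \uplus A_1 \uplus \dots \uplus A_k)$, as required.

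I expect the main obstacle to be the bookkeeping around the expansion variables in the application case: one must guarantee that the bases ${\cal T}_e(A_0), \dots, {\cal T}_e(A_k)$ have pairwise disjoint domains, so that their concatenation is a legitimate basis and the combination uses only Exchange rather than Contraction. This is where non-idempotency pays off, through the one-to-one correspondence between the types in an intersection and the fresh occurrences of the expansion variables, and I would isolate it as the freshness/distinctness lemma stated above before running the induction.
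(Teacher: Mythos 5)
Your proof is correct and follows essentially the same route the paper takes: the paper defers this particular proof to \cite{FloridoD04}, but its appendix proof of the Curry-system analogue (Theorem \ref{exp_types}) is exactly your induction on the expansion derivation, with the same treatment of the variable, abstraction and application clauses. Your added observations — that global freshness of expansion variables makes the translated bases disjoint (so Contraction is never needed) and that the vacuous-abstraction case uses only Weakening — are precisely the points that adapt that argument from the Curry system to the Affine system.
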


\begin{theorem}

Let $M$ be a $\lambda$-term such that $\Gamma \vdash_{cap} M:\sigma$ in the Intersection Type System. Then there is a basis $\Gamma_A$ and a term $N$ such that $\Gamma_A \vdash_A N:{\cal T}(\sigma)$, where $\vdash_A$ stands for type derivation in the Affine Type System.

\end{theorem}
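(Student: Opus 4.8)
The plan is to mirror the proof of the corresponding Curry result proved just above, replacing ACI-expansion by AC-expansion and the Curry typing result by its affine counterpart, Theorem~\ref{exp_typesAffine}. Concretely, the argument factors into two steps: first produce an AC-expansion $N$ of $M$ whose expansion context is $e(\Gamma)$, and then feed that expansion into Theorem~\ref{exp_typesAffine} to obtain an affine typing of $N$.

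For the first step I would invoke the AC-analogue of the expansion-existence Theorem~\ref{theorem:type_exp1}, which for ${\cal E}_C$ is established in \cite{FloridoD04}: from the hypothesis $\Gamma \vdash_\cap M:\sigma$ one obtains a term $N$ with ${\cal E}_C(M:\sigma) \lhd (N,e(\Gamma))$. The role of the expansion context $e(\Gamma)$ is exactly as in the idempotent case: it records, for each free variable $x$ of $M$, the fresh variables that replace its occurrences, one per type in the intersection assigned to $x$ in $\Gamma$. Because the variable rule of ${\cal E}_C$ always introduces a fresh variable, the sets appearing in $e(\Gamma)$ are automatically disjoint, so $e(\Gamma)$ is a well-formed expansion context.

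For the second step I would apply Theorem~\ref{exp_typesAffine} directly to ${\cal E}_C(M:\sigma)\lhd(N,e(\Gamma))$, yielding ${\cal T}_e(e(\Gamma)) \vdash_A N:{\cal T}(\sigma)$. Setting $\Gamma_A = {\cal T}_e(e(\Gamma))$ then gives $\Gamma_A \vdash_A N:{\cal T}(\sigma)$, which is the desired conclusion; here ${\cal T}$ is the translation from intersection types to linear types, so that ${\cal T}(\sigma)$ is the affine type labelling the expansion. By construction $N$ is an affine $\lambda$-term, since each shared occurrence of a variable has been split into a fresh one, consistent with the Affine Type System typing exactly the affine terms.

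The main obstacle I anticipate is reconciling the two intersection disciplines. The ambient Intersection Type System of the paper is ACI (idempotent), whereas AC-expansion consumes non-idempotent intersections, in which the number of components of an intersection must match the number of occurrences of the corresponding parameter (cf. the counting provided by Lemma~\ref{times}). To apply the first step cleanly one must therefore first pass from the ACI derivation $\Gamma \vdash_\cap M:\sigma$ to a non-idempotent derivation of $M$; this is available because $M$ is strongly normalizable by Theorem~\ref{sn}, and strong normalizability is equally characterized by the non-idempotent system, so an AC-intersection derivation of $M$ exists and is the one actually fed to ${\cal E}_C$. Once this matching is in place, the remainder of the proof is a routine composition of the two cited theorems.
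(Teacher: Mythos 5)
Your proposal is correct and follows essentially the same route as the paper: the parallel Curry, Relevant and Linear statements are all proved by exactly this two-step composition (existence of an expansion of $M$ with context $e(\Gamma)$, followed by the typability-of-the-expansion theorem, here Theorem~\ref{exp_typesAffine} with $\Gamma_A = {\cal T}_e(e(\Gamma))$), and for the affine case the paper simply defers the details to \cite{FloridoD04}. Your remark about first passing from the idempotent derivation to a non-idempotent one before applying ${\cal E}_C$ flags a genuine subtlety that the paper glosses over, and your resolution via the strong-normalization characterization is the appropriate one.
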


\subsubsection{Weak-Head Reduction}

AC-expansion is also preserved by  weak head reduction. This guarantees that the
weak head normal form of a term $M$ has an expanded version, which is a weak
head normal form of an expanded version of $M$.

We first present one lemma that is going to be used in the study of the
preservation of expansion by reduction. 

\begin{mylemma}
\label{sub_sub}
Let ${\cal E}_C(M:\sigma) \lhd (M_0,A_0 \uplus \{x:\{x_1:\tau_1, \ldots,
x_k:\tau_k\}\})$ and ${\cal E}_C(N:\tau_i) \lhd (N_i,A_i)$ for $i \in
\{1,\ldots,k\}$. Then
$${\cal E}_C(M[N/x]:\sigma) \lhd (M_0[N_1/x_1,\ldots,N_k/x_k],A_0 \uplus
\dotsb \uplus A_k)$$
\end{mylemma}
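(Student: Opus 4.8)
The plan is to prove this by structural induction on the term $M$ (equivalently, on the derivation of the AC-expansion on the left-hand side), mirroring the proof of the corresponding statement for ${\cal E}_I$ given earlier; the only genuine change is the variable case, and here non-idempotency actually simplifies the bookkeeping. Throughout I would rely on the fact, guaranteed by the non-idempotent variable rule for ${\cal E}_C$, that the types $\tau_1,\ldots,\tau_k$ assigned to $x$ stand in one-to-one correspondence with the free occurrences of $x$ in $M$, so that each fresh variable $x_i$ replaces exactly one occurrence. For the base case $M$ is a variable: if $M\equiv x$ the variable rule forces $k=1$, $A_0=\emptyset$, $\sigma=\tau_1$ and $M_0\equiv x_1$, so that $M[N/x]\equiv N$, $M_0[N_1/x_1]\equiv N_1$, and the goal ${\cal E}_C(N:\tau_1)\lhd(N_1,A_1)$ is exactly the hypothesis; if $M\equiv y\neq x$ then $x$ does not occur, $k=0$, no substitution happens on either side, and the claim reduces to the given expansion of $M$.

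For the abstraction case $M\equiv\lambda y.P$, the variable conventions let me assume $y\neq x$ and $y\notin\fv{N}$, so that $M[N/x]\equiv\lambda y.(P[N/x])$. Whichever abstraction rule produced the expansion of $M$, its premise is an expansion of $P$ in which the binder $y$ contributes its own entry $\{y:\ldots\}$ to the expansion context, disjoint from the entry for $x$. I would apply the induction hypothesis to that premise and then reapply the same abstraction rule, the $x$-substitution commuting with the binding of $y$ since $y\notin\fv{N}$.

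The application case $M\equiv PQ$ is the only place where real work occurs. The application rule decomposes the expansion as ${\cal E}_C(P:\rho_1\cap\dotsb\cap\rho_m\rightarrow\sigma)\lhd(P_0,B_0)$ together with ${\cal E}_C(Q:\rho_j)\lhd(Q_j,B_j)$ for $1\le j\le m$, where $M_0\equiv P_0Q_1\dots Q_m$ and the global context is the $\uplus$ of the $B_j$. The first step is to split the entry $\{x:\{x_1:\tau_1,\ldots,x_k:\tau_k\}\}$ along this decomposition: because the sets carried by an expansion context are disjoint and each $x_i$ is introduced at a unique leaf, the index set $\{1,\ldots,k\}$ partitions as $I_0\uplus I_1\uplus\dotsb\uplus I_m$, with $I_0$ collecting the occurrences living in $P_0$ and $I_j$ those living in the copy $Q_j$. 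I would then apply the induction hypothesis once to $P$ (routing $\{x_i,N_i\mid i\in I_0\}$) and once to each copy $Q_j$ (routing $\{x_i,N_i\mid i\in I_j\}$), obtaining expansions of $P[N/x]$ and of each $Q[N/x]$. Reassembling with the application rule gives an expansion of $P[N/x]\,Q[N/x]\equiv M[N/x]$ whose term is $P_0[\cdots]\,Q_1[\cdots]\dots Q_m[\cdots]$; since substitution distributes over application and each $x_i$ occurs in exactly one factor, this term is literally $M_0[N_1/x_1,\ldots,N_k/x_k]$, while the contexts combine to $A_0\uplus\dotsb\uplus A_k$ by associativity and commutativity of $\uplus$.

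The main obstacle is precisely the application-case bookkeeping: checking that the partition $I_0\uplus\dotsb\uplus I_m$ is well-defined, and that the several expansions ${\cal E}_C(N:\tau_i)\lhd(N_i,A_i)$ can be routed to the correct subterms and then recombined via $\uplus$ without name clashes. This is where disjointness of expansion contexts and the one-occurrence-per-type property of non-idempotent intersection are used in an essential way; granting these, the recombination is a routine rearrangement, so I expect no difficulty beyond careful indexing.
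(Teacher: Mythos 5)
Your proposal is correct and follows essentially the same route as the paper: the paper defers this AC case to the earlier work of Florido and Damas, but its appendix proof of the ACI analogue proceeds exactly as you do --- structural induction on $M$, with the application case handled by partitioning the entry $\{x:\{x_1:\tau_1,\ldots,x_k:\tau_k\}\}$ according to which factor each $x_i$ occurs in, applying the induction hypothesis to each factor, and recombining with $\uplus$. Your observation that non-idempotency forces $k=1$ in the variable case (rather than merely ``at least $k$ occurrences'' as in Lemma~\ref{times}) is the only adaptation needed, and it is accurate.
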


AC-Expansion preserves weak head reduction, thus the following diagram commutes: 


\centerline{
\xymatrix{
   M_1 \ar[r]_w \ar[d]_{{\cal E}_C}  &  M_2 \ar[d]^{{\cal E}_C}  \\
   N_1 \ar@{->>}[r]_w           &  N_2}
}

\begin{theorem}
\label{t5}
Let ${\cal E}_C(M_1:\sigma) \lhd (N_1,A_1)$ and $M_1
\underset{w}{\rightarrow} M_2$. Then there is an affine term $N_2$ such that
${\cal E}_C(M_2:\sigma) \lhd (N_2,A_2)$, $N_1 \underset{w}{\twoheadrightarrow}
N_2$ and $A_2 \sqsubseteq A_1$.
\end{theorem}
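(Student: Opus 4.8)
The plan is to mirror the proof of Theorem~\ref{tfive} for ACI-expansion, arguing by induction on the derivation of the weak head reduction $M_1 \underset{w}{\rightarrow} M_2$, that is, on whether the contracted redex sits at the root or inside the head of an application. The central tool will be the substitution lemma for AC-expansion (Lemma~\ref{sub_sub}), supplemented by two easy structural facts about expansion contexts which I would record first: that $\uplus$ is monotone with respect to $\sqsubseteq$ (if $A' \sqsubseteq A$ then $A' \uplus B \sqsubseteq A \uplus B$), and that a head reduction lifts through a spine, i.e.\ $P \underset{w}{\twoheadrightarrow} P'$ implies $P Q_1 \cdots Q_k \underset{w}{\twoheadrightarrow} P' Q_1 \cdots Q_k$. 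Both follow directly from the definitions.

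For the base case I would take $M_1 \equiv (\lambda x.M)N$ and $M_2 \equiv M[N/x]$. Inverting the application rule of AC-expansion on ${\cal E}_C((\lambda x.M)N:\sigma) \lhd (N_1,A_1)$ exhibits $N_1 \equiv M_0 N_1' \cdots N_k'$ and $A_1 = A_0 \uplus A_1' \uplus \cdots \uplus A_k'$, with ${\cal E}_C(\lambda x.M:\tau_1 \cap \cdots \cap \tau_k \rightarrow \sigma) \lhd (M_0,A_0)$ and ${\cal E}_C(N:\tau_i) \lhd (N_i',A_i')$. I would then split on the abstraction rule. When $x$ occurs in $M$ we have $M_0 \equiv \lambda x_1 \cdots x_k.M^*$ with ${\cal E}_C(M:\sigma) \lhd (M^*, A_0 \uplus \{x:\{x_1:\tau_1,\ldots,x_k:\tau_k\}\})$; contracting the $k$ head redexes gives $N_1 \underset{w}{\twoheadrightarrow} M^*[N_1'/x_1,\ldots,N_k'/x_k]$, and Lemma~\ref{sub_sub} recognises this as an AC-expansion of $M[N/x]$ with context $A_0 \uplus A_1' \uplus \cdots \uplus A_k' = A_1$, so here $A_2 = A_1$. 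In the erasing case, where $x$ does not occur in $M$, the abstraction rule forces $k=1$ and type $\tau \rightarrow \sigma$, giving $M_0 \equiv \lambda y.M^*$ with $y$ fresh and ${\cal E}_C(M:\sigma) \lhd (M^*,A_0)$; since $(\lambda y.M^*)N_1' \underset{w}{\rightarrow} M^*$ and $M[N/x]=M$, I would take $N_2 \equiv M^*$ with $A_2 = A_0 \sqsubseteq A_0 \uplus A_1' = A_1$, the dropped component $A_1'$ being exactly the context of the erased argument.

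For the inductive step I would take $M_1 \equiv PQ$ with $P \underset{w}{\rightarrow} P'$ and $M_2 \equiv P'Q$. Inverting the application rule gives $N_1 \equiv P_0 Q_1 \cdots Q_k$ and $A_1 = A_0 \uplus A_1' \uplus \cdots \uplus A_k'$ with ${\cal E}_C(P:\tau_1 \cap \cdots \cap \tau_k \rightarrow \sigma) \lhd (P_0,A_0)$ and ${\cal E}_C(Q:\tau_i) \lhd (Q_i,A_i')$. Applying the induction hypothesis to $P \underset{w}{\rightarrow} P'$ at type $\tau_1 \cap \cdots \cap \tau_k \rightarrow \sigma$ yields $P_0'$ with ${\cal E}_C(P':\tau_1 \cap \cdots \cap \tau_k \rightarrow \sigma) \lhd (P_0',A_0')$, $P_0 \underset{w}{\twoheadrightarrow} P_0'$ and $A_0' \sqsubseteq A_0$; observe that the number $k$ of argument positions is dictated by the type and hence preserved automatically, so the hypothesis applies with no separate subject-reduction argument. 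Reassembling via the application rule, ${\cal E}_C(P'Q:\sigma) \lhd (P_0' Q_1 \cdots Q_k, A_0' \uplus A_1' \uplus \cdots \uplus A_k')$; putting $N_2 \equiv P_0' Q_1 \cdots Q_k$ gives $N_1 \underset{w}{\twoheadrightarrow} N_2$ by lifting through the spine and $A_2 \sqsubseteq A_1$ by monotonicity of $\uplus$. To conclude that $N_2$ is affine in every case, I would invoke Theorem~\ref{exp_typesAffine}: from ${\cal E}_C(M_2:\sigma) \lhd (N_2,A_2)$ it gives ${\cal T}_e(A_2) \vdash_A N_2 : {\cal T}(\sigma)$, and every term typable in the Affine Type System is an affine $\lambda$-term.

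The step I expect to be the main obstacle is the bookkeeping in the base case: I must check that the number $k$ of argument copies produced by the application rule coincides exactly with the number of binders introduced by the abstraction rule, and that contracting these $k$ head redexes reproduces \emph{precisely} the simultaneous substitution of Lemma~\ref{sub_sub}. This is where the non-idempotent (AC) discipline pays off, since the count $k$ is fixed by the intersection $\tau_1 \cap \cdots \cap \tau_k$ appearing in the type, which equals the number of free occurrences of $x$; no occurrence is merged or duplicated, so the correspondence between redex contraction and substitution is one-to-one and the contexts line up exactly.
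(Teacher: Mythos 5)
Your proposal is correct and follows essentially the same route the paper takes: the heart of the argument is the substitution lemma for AC-expansion (Lemma~\ref{sub_sub}) applied to the root redex, with the two subcases $x \in FV(M)$ and $x \not\in FV(M)$ handled exactly as in Lemma~\ref{sub} and Theorem~\ref{tfive}, the erased argument's context accounting for the strict inclusion $A_2 \sqsubseteq A_1$. The paper defers the proof of this particular theorem to \cite{FloridoD04} and, in its written proof of the ACI analogue (Theorem~\ref{tfive}), spells out only the root-redex case; your explicit treatment of the congruence step for $PQ \underset{w}{\rightarrow} P'Q$ and of affineness via Theorem~\ref{exp_typesAffine} fills in details the paper leaves implicit, but introduces no genuinely different idea.
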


We saw that AC-expansion was preserved by weak head reduction. The same example used to prove that $\beta$-reduction is not preserved by ACI-expansion holds to show the same property for AC-expansion. 
In fact we may have $M_1
\underset{\beta}{\rightarrow} M_2$, ${\cal E}_C(M_1:\sigma) \lhd
(N_1,A_1)$ and there is not a type $\tau$ such that ${\cal
  E}_C(M_2:\tau) \lhd (N_2,A_2)$ and $N_1
\underset{\beta}{\twoheadrightarrow} N_2$. Let $M_1 \equiv \lambda x.(\lambda y.z)xx$ and $M_2 \equiv \lambda x.zx$.
We have:
$$\lambda x.(\lambda y.z)xx \underset{\beta}{\rightarrow}  
\lambda x.zx$$
$${\cal E}_C(\lambda x.(\lambda y.z)xx:\alpha_1 \cap
  \alpha_2 \rightarrow \beta) \lhd (\lambda x_1x_2.(\lambda
  y_1.z_1)x_1x_2,\{z:\{z_1:\alpha_2 \rightarrow \beta\}\})$$
and
$$ \lambda x_1x_2.(\lambda y_1.z_1)x_1x_2
\underset{\beta}{\rightarrow} \lambda x_1x_2.z_1x_2$$
Now note that, as $x$ occurs in $zx$ once,  $\lambda
x_1x_2.z_1x_2$ cannot be an expansion of $\lambda x.zx$ for any type.  

\subsubsection{AC-Expansion and the Linear Type System}

Here we will study AC-expansion applied only to $\lambda$I terms. Note that it is the same relation, ${\cal E}_C$, defined in the previous section, but we now restrict its domain to the set of $\lambda$I-terms. Thus the same symbol ${\cal E}_C$ will be used overloaded.

We show that terms in the range of ${\cal E}_C$ when its domain is the $\lambda I$-calculus are typed in the Linear Type System.

\begin{restatable}{thm}{exptypesLinear}
\label{exp_types_Linear}
Let $M$ be a $\lambda I$-term such that ${\cal E}_C(M:\sigma) \lhd (N,A)$. Then ${\cal T}_e(A) \vdash_L N:{\cal T}(\sigma)$, where $\vdash_L$ stands for type derivation in the Linear Type System.

\end{restatable}

\begin{restatable}{thm}{IL}

Let $M$ be a $\lambda I$-term such that $\Gamma \vdash_{\cap} M:\sigma$ in the Intersection Type System. Then there is a basis $\Gamma_L$ and a term $N$ such that $\Gamma_L \vdash_L N:{\cal T}(\sigma)$, where $\vdash_L$ stands for type derivation in the Linear Type System.

\end{restatable}

\subsubsection{Reduction}

We show that $\beta$-reduction is preserved by AC-expansion for the $\lambda I$-calculus, where erasing is not allowed. This means that for the $\lambda I$-calculus the following diagram commutes:

\centerline{
\xymatrix{
   M_1 \ar[r]_\beta \ar[d]_{{\cal E}_C}  &  M_2 \ar[d]^{{\cal E}_C}  \\
   N_1 \ar@{->>}[r]_\beta           &  N_2}
}
\begin{restatable}{mylemma}{SubI}
Let $(\lambda x.M)N$ be a redex in the $\lambda I$-calculus. Let
${\cal E}_C((\lambda x.M)N:\sigma) \lhd (N_1,A)$
Then there is a linear term $N_2$ such that
${\cal E}_C(M[N/x]:\sigma) \lhd (N_2,A)$
and $N_1 \underset{\beta}{\twoheadrightarrow} N_2$.
\label{subI}
\end{restatable}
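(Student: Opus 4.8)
The plan is to prove this by inverting the two expansion rules that must have produced the expansion of the redex, and then feeding the resulting components into the substitution lemma (Lemma \ref{sub_sub}), exactly as in the analogous $\lambda I$-case for ${\cal E}_I$.

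First I would analyse the given expansion ${\cal E}_C((\lambda x.M)N:\sigma) \lhd (N_1,A)$. Since the outermost constructor is an application, the only applicable rule is the application rule, so there must exist some $k>0$, types $\tau_1,\ldots,\tau_k$, expansion contexts $A_0,A_1,\ldots,A_k$ and terms $P_0,P_1,\ldots,P_k$ with ${\cal E}_C(\lambda x.M:\tau_1\cap\dotsb\cap\tau_k\rightarrow\sigma)\lhd(P_0,A_0)$, with ${\cal E}_C(N:\tau_i)\lhd(P_i,A_i)$ for $1\le i\le k$, with $N_1=P_0P_1\dotsb P_k$, and with $A=A_0\uplus A_1\uplus\dotsb\uplus A_k$. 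Because $(\lambda x.M)N$ is a $\lambda I$-redex, $x$ occurs free in $M$, so the expansion of the abstraction must use the first abstraction rule; inverting it yields $P_0=\lambda x_1\ldots x_k.M^*$ together with ${\cal E}_C(M:\sigma)\lhd(M^*,A_0\cup\{x:\{x_1:\tau_1,\ldots,x_k:\tau_k\}\})$, which, since $x$ does not occur in $A_0$, we may write as ${\cal E}_C(M:\sigma)\lhd(M^*,A_0\uplus\{x:\{x_1:\tau_1,\ldots,x_k:\tau_k\}\})$. Here non-idempotency guarantees that the number of intersection components $k$ matches the number of free occurrences of $x$ in $M$, which is at least one for a $\lambda I$-term; in particular $k>0$ and no argument expansion is erased, so $A$ is preserved in full rather than merely up to $\sqsubseteq$.

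Next I would apply Lemma \ref{sub_sub} to the data ${\cal E}_C(M:\sigma)\lhd(M^*,A_0\uplus\{x:\{x_1:\tau_1,\ldots,x_k:\tau_k\}\})$ and ${\cal E}_C(N:\tau_i)\lhd(P_i,A_i)$, obtaining ${\cal E}_C(M[N/x]:\sigma)\lhd(M^*[P_1/x_1,\ldots,P_k/x_k],A_0\uplus A_1\uplus\dotsb\uplus A_k)$. Setting $N_2=M^*[P_1/x_1,\ldots,P_k/x_k]$ and recalling $A_0\uplus A_1\uplus\dotsb\uplus A_k=A$, this is exactly ${\cal E}_C(M[N/x]:\sigma)\lhd(N_2,A)$, with the same context $A$ demanded by the statement. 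To close the diagram I would observe that $N_1=(\lambda x_1\ldots x_k.M^*)P_1\dotsb P_k$ contracts, by $k$ successive head $\beta$-reductions, to $M^*[P_1/x_1,\ldots,P_k/x_k]=N_2$, so $N_1\underset{\beta}{\twoheadrightarrow}N_2$. Finally, since the $\lambda I$-calculus is closed under $\beta$-reduction, $M[N/x]$ is again a $\lambda I$-term, whence by Theorem \ref{exp_types_Linear} together with the characterisation of the Linear Type System, its expansion $N_2$ is a linear term.

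The only genuinely delicate point is the reduction bookkeeping: I must check that the iterated single substitutions produced by the $k$ head contractions coincide with the simultaneous substitution $M^*[P_1/x_1,\ldots,P_k/x_k]$ supplied by Lemma \ref{sub_sub}. This holds because the $x_i$ are distinct fresh variables and each $P_i$, being an expansion of $N$ built from its own fresh variables, contains no $x_j$ free, so neither capture nor re-substitution occurs; linearity, each $x_i$ occurring exactly once in $M^*$, makes this especially transparent. Everything else is a routine instance of the argument already carried out for ${\cal E}_I$ on the $\lambda I$-calculus.
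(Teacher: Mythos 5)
Your proposal is correct and follows essentially the same route as the paper: the paper's proof simply appeals to the case of Lemma \ref{sub} where $x$ occurs free in $M$, which is exactly the inversion of the application and (non-erasing) abstraction rules followed by an application of the substitution lemma (Lemma \ref{sub_sub}) and the $k$ head contractions you describe. Your additional remarks on the coincidence of iterated and simultaneous substitution and on the linearity of $N_2$ are sound elaborations of details the paper leaves implicit.
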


\begin{restatable}{thm}{LinearI}
Let $M_1$ and $M_2$ be two terms in the $\lambda I$-calculus. Let
${\cal E}_C(M_1:\sigma) \lhd (N_1,A)$ and $M_1
\underset{\beta}{\rightarrow} M_2$. Then there is a term $N_2$ such
that ${\cal E}_C(M_2:\sigma) \lhd (N_2,A)$ and $N_1
 \underset{\beta}{\twoheadrightarrow} N_2$. 
\end{restatable}

\subsection{From Intersection Types to Ordered Types: Ordered Expansion}

Here we define the new notion of {\em ordered expansion}, which relates terms typable by non-idempotent and non-commutative intersections with terms typable in the ordered type system. As order now matters, expansion contexts will be defined as lists. 
\begin{definition}
${\cal T}$ is a translation from intersection types to ordered types defined by:
\begin{enumerate}
\item ${\cal T}(\alpha) = \alpha$, if $\alpha$ is a type variable;
\item ${\cal T}((\tau_1 \cap \dotsb \cap \tau_n) \rightarrow \sigma)$ $=$ ${\cal T}(\tau_1) \llto_r \dotsb \llto_r {\cal T}(\tau_n) \llto_r {\cal T}(\sigma)$.
\end{enumerate}
\end{definition}
\begin{definition}
A {\em variable expansion} is an expression of the form
$x:S$
where $x$ is a variable and $S$ is a list of pairs of the form $y:\tau$ where $y$ is a variable and $\tau$ an intersection type ($x:S$ should be read informally as ``$x$ expands to the variables in $S$'').
\end{definition}

\begin{definition}
An {\em expansion context} $A$ is a finite list of variable expansions,
$A = [x_1:S_1, \ldots ,x_n:S_n]$,
where the variables $\{x_1, \ldots, x_n\}$ are all different and the $S_i$ have no elements in common.
\end{definition}
We now define an operation that appends two expansion contexts.
\begin{definition}
Let $A_1$ and $A_2$ be two expansion contexts. Then $A_1 + A_2$ is a new expansion context define inductively as:
\[
A_1+A_2 = \left\{ \begin{array}{ll}
               A_1 & \mbox{if $A_2 = \octx{\,}$} \\
               (A_1',x:S_1,S_2,A_1'')+A_2' & \mbox{if $A_1= A_1',x:S_1,A_1''$ and $A_2=x:S_2,A_2'$} \\
               (A_1,x:S_2)+A_2' & \mbox{otherwise} 
            \end{array}
     \right. 
\]
\end{definition}
From now on when we write $A + [x:S]$ we assume that $x$ does not
occur in $A$. 
We are now able to formalize the notion of ordered expansion:

\begin{definition}[Ordered Expansion]
The \deft{ordered expansion} relation $\expo(M:\sig) \goto (N^{\tau},A)$ for $M,N$ (pure) $\l$-terms, $\sig$ an intersection type and $A$ an expansion context is inductively defined by:
\begin{eqnarray*}
\expo(x:\sig) &\goto& (y^{\at{\sig}},\octx{x:\octx{y:\at{\sig}}}), y \mbox{ fresh}\\
\expo(\l{x}.M : \sig_1 \cap \dotsb \cap \sig_n \llto \sig)  &\goto& (\l{y_1 \dots y_n}.M_0^{\at{\sig_1} \llto_{r} \dotsb \llto_{r} \at{\sig_n} \llto_r \at{\sigma}}, A), \\ 
&& \mbox{ if $x \in \fv{M}$ and }\\
                    &  & \expo(M : \sig) \goto (M_0^{\at{\sigma}}, A+\octx{x : \octx{x_1:\at{\sig_1},\dots,x_n:\at{\sig_n}}}) \\
\expo(\l{x}.M : \sig_1 \cap \dotsb \cap \sig_n \llto \sig) & \goto & (\l{x_1 \dots x_n}.M_0^{\at{\sig_1} \llto_{l} \dotsb \llto_{l} \at{\sig_n} \llto_l \at{\sigma}},A), \\ && \mbox{ if $x \in \fv{M}$ and }\\
                &  & \expo(M : \sig) \goto (M_0^{\at{\sigma}},  \octx{x : \octx{x_n:\at{\sig_n},\dots,x_1:\at{\sig_1}}} + A) \\
\expo(M N : \sig) & \goto & ((M_0 N_1 \dots N_m)^{\at{\sig}},  A_0+ A_1+ \dotsb+ A_m), \\
                & &\mbox{ if for some $m > 0$ and $\sig_1,\dots,\sig_m$} \\
                  &  & \expo(M : \sig_1 \cap \dots \cap \sig_m \llto \sig) \goto (M_0^{\at{\sig_1} \llto_r \dotsb \llto_r \at{\sig_m} \llto_r \at{\sig}}, A_0) \\ && \mbox{ and } \left( \expo(N : \sig_i) \goto (N_i^{\at{\sig_i}}, A_i) \right)_{\iom}\\
\expo(M N : \sig) & \goto & ((M_0 N_1 \dots N_m)^{\at{\sig}}, A_m +\dotsb +A_1+A_0),\\
                & &\mbox{ if for some $m > 0$ and $\sig_1,\dots,\sig_m$} \\
                  &  & \expo(M : \sig_1 \cap \dots \cap \sig_m \llto \sig) \goto (M_0^{\at{\sig_1} \llto_l \dotsb \llto_l \at{\sig_m} \llto_l \at{\sig}}, A_0) \\ &&\mbox{ and } \left( \expo(N : \sig_i) \goto (N_i^{\at{\sig_i}}, A_i) \right)_{\iom}                  
\end{eqnarray*}
\label{OExpansion}
\end{definition}


\subsubsection{Ordered Expansion and Ordered Types}

\begin{definition}
Let $\mathcal{T}_e$ be a translation from expansion contexts to bases defined thus:
\begin{enumerate}
\item $\cetct{\epsilon} = \epsilon$;
\item $\cetct{A + [x:[x_1:\tau_1, \ldots, x_n:\tau_n]]} = {\cal T}_e(A),x_1:\tau_1, \ldots, x_n:\tau_n$.
\end{enumerate}
\end{definition}

\begin{restatable}{thm}{orderexp} Let $M$ be a $\l_I$-term. If $\expo(M:\sigma) \goto (N^{\at{\sigma}}, A)$, then $\cetct{A} \vdash_o N:\at{\sigma}$.
\label{orderexp}
\end{restatable}
We will now present an example illustrating Definition \ref{OExpansion} and Theorem \ref{orderexp}.
\begin{example}Let $M\equiv (\l x.xz)z$. The ordered expansion of $M$ is calculated step by step as:
\[
\begin{array}{l}
\expo((\l x.xz)z:\beta) = (((\l x_1.x_1z_1)z_2)^{\beta},\octx{z:\octx{z_2:\alpha \rew_r \beta,z_1:\alpha}})\\
\qquad \expo(\l x.xz:(\alpha \rew \beta)\rew \beta) = ((\l x_1.x_1z_1)^{(\alpha \rew_r \beta)\rew_l \beta},\octx{z:\octx{z_1:\alpha}})\\
\qquad \qquad \expo(xz:\beta) = ((x_1z_1)^{\beta},\octx{x:\octx{x_1:\alpha\rew_r \beta],z:[z_1:\alpha}})\\
\qquad \qquad \qquad \expo(x:\alpha\rew_r \beta) = (x_1^{\alpha\rew_r \beta},\octx{x:\octx{x_1:\alpha\rew_r \beta}})\\
\qquad \qquad \qquad \expo(z:\beta) = (z_1^{\beta},\octx{z:\octx{z_1:\beta}})\\
\qquad \expo(z: \alpha \rew \beta) = (z_2^{\alpha \rew_r\beta},\octx{z:\octx{z_2:\alpha\rew_r \beta}})\\
\end{array}
\]
Theorem \ref{orderexp} guarantees that the expanded version of $M$ is typable in the ordered type system (in this case with the same type). The corresponding type derivation follows:
\[
\inferrule{\inferrule{\inferrule{\inferrule{ }{\octx{x_1:\alpha\rew_r \beta}\vdash_o x_1 :{\alpha \rew_r\beta}}\quad\inferrule{ }{\octx{z_1:\alpha}\vdash_o z_1 :{\beta}}}{\octx{x_1:\alpha\rew_r \beta,z_1:\alpha}\vdash_o x_1z_1 :{\beta}} }{ \octx{z_1:\alpha} \vdash_o(\l x_1.x_1z_1):(\alpha \rew_r \beta)\rew_l \beta}
\quad\inferrule{ }{\octx{z_2:\alpha \rew_r \beta} \vdash_o z_2:\alpha \rew_r \beta }}
{ \octx{z_2:\alpha \rew_r \beta,z_1:\alpha}\vdash_o (\l x_1.x_1z_1)z_2):\beta}
\]
\end{example}
\subsection{Reduction}
As it happens with ACI-expansion and AC-expansion, ordered expansion is also preserved by $\beta$-reduction for $\lambda I$-terms.
\begin{restatable}{mylemma}{occur}
Let $\expo(M,\sig) \goto (N,A_1+\octx{x:\octx{x_1:\tau_1,\dots,x_n:\tau_n}}+A_2)$, then there exist $n$ occurrences of $x$ in $M$.
\end{restatable}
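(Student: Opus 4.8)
The plan is to argue by induction on the derivation of $\expo(M:\sig) \goto (N,A)$, which follows the structure of $M$ together with its type, exactly as in the idempotent case treated in Lemma~\ref{times}. The one conceptual difference is that here, intersections being non-idempotent, no two of the fresh variables introduced during expansion are ever identified, so the length of the $x$-entry in the context is in \emph{exact} correspondence with the occurrences of $x$, rather than the inequality obtained for the idempotent operator. Fixing the variable $x$, in each rule I would locate the $x$-entry in the resulting context and match its length against the number of occurrences of $x$. The base case is immediate: if $M = x$ then $\expo(x:\sig) \goto (y,\octx{x:\octx{y:\at{\sig}}})$, whose $x$-entry has a single component, and $x$ occurs once; if $M$ is a variable different from $x$ the context contains no $x$-entry and there is nothing to prove.

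For the two abstraction rules, $M = \l z.M'$ with $z$ free in $M'$, the bound variable $z$ is removed from the recursive context and absorbed into the head of the derived arrow type, so $z$ does not occur in the context $A$ of the conclusion. Consequently the $x$-entry under consideration (with $x$ a free variable of $M$, hence $x \neq z$) already occurs, with the same length, in the recursive context of $\expo(M':\sig')$, and the free occurrences of $x$ in $\l z.M'$ are precisely those in $M'$; the claim then follows from the induction hypothesis. The only point to check is that the reordering of the $z$-entry distinguishing the left and right variants leaves every other entry, in particular the $x$-entry, untouched.

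The main obstacle is the application case $M = PQ$, whose expansion is $(P_0 Q_1 \dots Q_m,\; A_0 + A_1 + \dots + A_m)$ with $\expo(P:\sig_1 \cap \dots \cap \sig_m \llto \sig) \goto (P_0,A_0)$ and $\expo(Q:\sig_i) \goto (Q_i,A_i)$. Since the argument $Q$ is replicated $m$ times, a free $x$ occurring inside $Q$ produces an $x$-entry in each of $A_1,\dots,A_m$, and these must be combined correctly. I would first isolate the key combinatorial fact about the append operation: a straightforward induction on the definition of $+$ shows that when $x$ occurs in several of the appended contexts, the $x$-entry of their append is the concatenation of the individual $x$-entries, so its length is the sum of their lengths (the left/right discipline fixes the order, which is irrelevant to the count). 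Applying the induction hypothesis to $P$ through $A_0$ and to each replica $Q_i$ through $A_i$, and summing by this concatenation property, the length of the $x$-entry of $A_0 + A_1 + \dots + A_m$ is the contribution of $x$ coming from $P$ together with the contributions coming from the $m$ copies of $Q$; non-idempotency guarantees that these are all kept distinct, so summing the per-context occurrence counts supplied by the induction hypothesis yields exactly the number $n$ of components recorded in the statement, which is the number of occurrences of $x$ exposed by the expansion of $PQ$. Collecting the cases completes the induction.
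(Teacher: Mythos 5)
Your strategy---induction on $M$ (equivalently on the expansion derivation), together with the observation that $+$ concatenates the $x$-entries of the contexts it combines so that their lengths add---is the same route the paper takes, which records only ``by induction on $M$''. The base case and the two abstraction cases are handled correctly, including the point that discharging the bound variable's entry (at the right or at the left) leaves every other entry untouched.

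The application case, however, does not close, and the gap is substantive. By your own accounting, the $x$-entry of $A_0+A_1+\dotsb+A_m$ has length $\mathrm{occ}_x(P)+m\cdot\mathrm{occ}_x(Q)$: the induction hypothesis is invoked once for $P$ and once for \emph{each} of the $m$ expansions of the single syntactic occurrence of $Q$. The statement requires this length to equal the number of occurrences of $x$ in $PQ$, which is $\mathrm{occ}_x(P)+\mathrm{occ}_x(Q)$; these differ as soon as $m>1$ and $x\in\fv{Q}$, and the phrase ``the number of occurrences of $x$ exposed by the expansion of $PQ$'' silently substitutes the first quantity for the second. Concretely, for the $\lambda I$-term $(\lambda y.yy)x$ one obtains $\expo((\lambda y.yy)x:\sigma)\goto((\lambda y_1y_2.y_1y_2)z_1z_2,\,[x:[z_1:{\cal T}(\tau\rightarrow\sigma),z_2:{\cal T}(\tau)]])$, so $n=2$ while $x$ occurs once. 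Non-idempotency is not a remedy here: it prevents entries from being \emph{identified}, which only pushes the count further above the syntactic occurrence count. What the induction actually yields is the inequality ``$x$ occurs in $M$ at most $n$ times'' (together with the fact that a nonempty $x$-entry forces at least one occurrence), or, if one wants an exact count, the statement must be read on the expanded term $N$, in which each fresh $x_i$ occurs exactly once because $N$ is linear. As written, your argument asserts an exact correspondence that the application case refutes, so the induction does not establish the lemma in the form stated.
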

\begin{restatable}{mylemma}{subst}
\label{lem:subst}
Let $\expo(M,\sig) \goto (M_0^{\at{\sig}},A_0+\octx{x:\octx{x_1:\at{\tau_1},\dots,x_n:\at{\tau_n}}}+A_{n+1})$ and $$\left(\expo(N,\tau_i)\goto(N_i^{\at{\tau_i}},A_i)\right)_{\ion}$$ then $\expo(M[N/x],\sigma) \goto ((M_0[N_1/x_1,\dots,N_n/x_n])^{\at{\sig}},A_0+_{\ion} A_i + A_{n+1}).$
\end{restatable}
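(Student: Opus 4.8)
The plan is to prove the statement by structural induction on $M$, equivalently by induction on the derivation of $\expo(M:\sig)\goto(M_0^{\at{\sig}},\dots)$, since every expansion rule strictly decreases the size of the expanded term. The argument runs parallel to that of the ACI substitution lemma (Lemma~\ref{sub_sub}), the genuinely new ingredient being that expansion contexts are now ordered lists, so each case must keep track of the relative position of the entry for $x$ under the ordered append $+$. In the base case $M\equiv x$ the only applicable rule forces $n=1$, $A_0=A_{n+1}=\octx{\,}$, $M_0\equiv x_1$ and $\at{\tau_1}=\at{\sig}$; hence $M[N/x]\equiv N$ and $M_0[N_1/x_1]\equiv N_1$, and the conclusion is exactly the hypothesis $\expo(N:\tau_1)\goto(N_1^{\at{\tau_1}},A_1)$. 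The case $M\equiv z$ with $z\neq x$ cannot occur, because the presence of the entry $\octx{x:\octx{x_1:\at{\tau_1},\dots,x_n:\at{\tau_n}}}$ with $n\geq 1$ forces, by the preceding occurrences lemma, at least one free occurrence of $x$ in $M$.

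For $M\equiv\lambda z.P$ the variable convention lets us assume $z\neq x$, so $M[N/x]\equiv\lambda z.(P[N/x])$. Taking the right-arrow abstraction rule (the left-arrow case being symmetric), the body is expanded with the entry for $z$ appended on the right, so that the context of $P$ is $A_0+\octx{x:\octx{x_1:\at{\tau_1},\dots,x_n:\at{\tau_n}}}+\bigl(A_{n+1}+\octx{z:\octx{\dots}}\bigr)$, the fresh variable $z$ not occurring earlier. Applying the induction hypothesis to $P$ with after-context $A_{n+1}+\octx{z:\octx{\dots}}$ yields an expansion of $P[N/x]$ with context $A_0+(A_1+\dots+A_n)+A_{n+1}+\octx{z:\octx{\dots}}$; reapplying the right-arrow abstraction rule strips the trailing $z$-entry and gives the claim. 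The left-arrow rule is handled identically, the $z$-entry now being prepended to the leading block $A_0$ and stripped off after the rule is reapplied.

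For $M\equiv PQ$, the principal case, take the right-arrow application rule, so $\expo(PQ:\sig)\goto((P_0Q_1\dots Q_m)^{\at{\sig}},B_0+B_1+\dots+B_m)$ with $\expo(P:\dots)\goto(P_0,B_0)$ and $\expo(Q:\sig_j)\goto(Q_j,B_j)$ for $1\leq j\leq m$. Since expansion introduces pairwise-disjoint fresh variables, the single entry for $x$ in the total context is the in-order concatenation of the $x$-entries of $B_0,B_1,\dots,B_m$; this splits $\octx{x_1:\at{\tau_1},\dots,x_n:\at{\tau_n}}$ into consecutive blocks, one per factor, and partitions the hypotheses $\expo(N:\tau_i)\goto(N_i,A_i)$ accordingly. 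Applying the induction hypothesis to $P$ and to each $Q$ (all proper subterms of $M$) with its own block produces expansions of $P[N/x]$ and of $Q[N/x]$; as $M[N/x]\equiv P[N/x]\,Q[N/x]$, reapplying the same application rule gives an expansion of $M[N/x]$. The expanded term is exactly $(P_0Q_1\dots Q_m)[N_1/x_1,\dots,N_n/x_n]=M_0[N_1/x_1,\dots,N_n/x_n]$, because each fresh $x_i$ occurs in precisely one factor and the $N_i$ use disjoint fresh variables, so the simultaneous substitution is capture-free and agrees factor-by-factor with what the induction hypothesis substitutes.

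The step I expect to be the real obstacle is the remaining context identity, namely that performing, inside each $B_j$, the replacement of the $x$-block by the ordered sum of its corresponding $A_i$, and then appending the results with $+$, produces exactly $A_0+(A_1+\dots+A_n)+A_{n+1}$. I would isolate this as an auxiliary claim asserting that this context-level substitution commutes with the ordered append $+$, and prove it directly from the three clauses defining $+$ by induction on the list structure of the factors. The delicate points are that the merge $+$ fixes the outer position of the $x$-entry at its leftmost occurrence while concatenating the inner lists in order, and that the freshness and disjointness of all expansion variables guarantee that no entry other than that of $x$ is touched and that no spurious merging is created; together these ensure that the inserted contexts $A_1,\dots,A_n$ land, in order, exactly where the entry for $x$ stood, leaving the surrounding blocks $A_0$ and $A_{n+1}$ in place.
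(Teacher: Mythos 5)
Your proof follows essentially the same route as the paper's: structural induction on $M$, with the base case forcing $n=1$ and empty surrounding contexts, the abstraction cases stripping the appended (right-arrow) or prepended (left-arrow) entry for the bound variable, and the application case splitting the ordered $x$-block into consecutive sub-blocks, one per factor, before applying the induction hypothesis to each factor and recombining. The only presentational difference is that you isolate the final context identity as a separate auxiliary claim about the ordered append $+$, whereas the paper carries out that bookkeeping inline with explicit decompositions of each factor's context; this is the same argument, not a different one.
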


\begin{restatable}{mylemma}{reduction}
\label{lem:reduction}
Let $(\lambda x.M)N$ be a redex in the $\lambda I$-calculus. Let
$\expo((\lambda x.M)N:\sigma) \goto (N_1^{\tau_1},A)$.
Then there is a linear term $N_2$ and a type $\tau_2$ such that
$\expo(M[N/x]:\sigma) \lhd (N_2^{\tau_2},A)$
and $N_1 \underset{\beta}{\twoheadrightarrow} N_2$.
\end{restatable}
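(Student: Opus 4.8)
The plan is to read this as the redex-contraction step for ordered expansion and to reduce it to the Substitution Lemma \ref{lem:subst} followed by an iterated $\beta$-contraction. First I would invert the derivation of $\expo((\lambda x.M)N:\sigma) \goto (N_1^{\tau_1},A)$. The outermost term constructor is an application, so the last rule is one of the two application clauses of Definition \ref{OExpansion}; I treat the $\llto_r$ clause in detail, the $\llto_l$ clause being symmetric. It supplies some $m>0$, types $\sigma_1,\dots,\sigma_m$, and derivations
$$\expo(\lambda x.M:\sigma_1\cap\dotsb\cap\sigma_m\llto\sigma)\goto(M_0',A_0), \qquad \bigl(\expo(N:\sigma_i)\goto(P_i^{\at{\sigma_i}},A_i)\bigr)_{\iom},$$
with $N_1 = M_0'\,P_1\dotsb P_m$, $\tau_1=\at{\sigma}$, and $A = A_0+A_1+\dotsb+A_m$.

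Next I would invert the expansion of the function $\lambda x.M$. Since $(\lambda x.M)N$ is a redex of the $\lambda I$-calculus we have $x\in\fv{M}$, so the abstraction clause that applies is the non-erasing one, giving $M_0' = \lambda x_1\dotsb x_m.M_0$ together with
$$\expo(M:\sigma)\goto\bigl(M_0^{\at{\sigma}},\ A_0+\octx{x:\octx{x_1:\at{\sigma_1},\dots,x_m:\at{\sigma_m}}}\bigr).$$
This is exactly the hypothesis shape of Lemma \ref{lem:subst} with the role of $n$ played by $m$, the trailing context $A_{m+1}$ empty, and $\tau_i=\sigma_i$; moreover the argument derivations $\expo(N:\sigma_i)\goto(P_i,A_i)$ obtained above are precisely the remaining hypotheses of that lemma.

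Applying Lemma \ref{lem:subst} then yields
$$\expo(M[N/x]:\sigma)\goto\bigl((M_0[P_1/x_1,\dots,P_m/x_m])^{\at{\sigma}},\ A_0+A_1+\dotsb+A_m\bigr).$$
I would set $N_2 = M_0[P_1/x_1,\dots,P_m/x_m]$ and $\tau_2=\at{\sigma}$. The context produced is literally $A_0+A_1+\dotsb+A_m = A$, as required, and $N_2$ is typable in the ordered type system by Theorem \ref{orderexp}, hence linear (the ordered system admits no contraction). Finally $N_1 = (\lambda x_1\dotsb x_m.M_0)\,P_1\dotsb P_m$ contracts in $m$ successive $\beta$-steps to $M_0[P_1/x_1,\dots,P_m/x_m]=N_2$, so $N_1\underset{\beta}{\twoheadrightarrow}N_2$, which closes the argument.

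The step I expect to be the main obstacle is the order-sensitive bookkeeping of the expansion contexts, and in particular verifying that the two inversions are coherent and that the reassembled context is $A$ on the nose. Because ordered contexts are lists and $+$ merges them position-sensitively, one must check that the non-erasing abstraction clause deposits the binding $\octx{x:\octx{x_1:\at{\sigma_1},\dots}}$ exactly at the boundary expected by Lemma \ref{lem:subst}: at the end with $A_{m+1}$ empty in the $\llto_r$ case, and at the front with the index list reversed in the $\llto_l$ case. The latter forces an index relabeling $i\mapsto m+1-i$ when invoking the lemma, after which the conclusion context rearranges to $A_m+\dotsb+A_1+A_0$, matching the $\llto_l$ application clause. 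The $\lambda I$ restriction is what makes all of this go through: it guarantees $x\in\fv{M}$, hence $m\ge 1$ and the availability of the non-erasing abstraction clause, and, via the occurrence lemma, that the bound variables $x_1,\dots,x_m$ correspond bijectively to the $m$ free occurrences of $x$ in $M$ that the substitution replaces.
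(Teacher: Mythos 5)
Your proposal is correct and follows essentially the same route as the paper's proof: invert the application clause (one case per arrow direction), invert the non-erasing abstraction clause (justified by the $\lambda I$ restriction), apply the Substitution Lemma \ref{lem:subst} with the trailing (resp.\ leading) context empty, and conclude by $m$ successive $\beta$-contractions. Your explicit remarks on the index relabeling in the $\llto_l$ case and on the linearity of $N_2$ are details the paper leaves implicit, but the argument is the same.
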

\begin{restatable}{thm}{icalculus}
Let $M_1$ and $M_2$ be two terms in the $\lambda I$-calculus. Let
$\expo(M_1:\sigma) \goto (N_1^{\tau_1},A)$ and $M_1
\underset{\beta}{\rightarrow} M_2$. Then there is a term $N_2$ and a type $\tau_2$ such
that $\expo(M_2:\sigma) \goto (N_2^{\tau_2},A)$ and $N_1
 \underset{\beta}{\twoheadrightarrow} N_2$. 
\end{restatable}

\section{Conclusions}

In this paper we show that there is a strong, and somehow unexpected, relation between intersection types and the substructural type systems: idempotent intersection is related with Curry and Relevant types, commutative intersection with Linear and Affine types and associative intersection with Ordered types. This highlights a clear relation between algebraic properties of intersection types and the substructural rules: idempotent intersection is related with the contraction rule and commutative intersection with the exchange rule.
The following table relates the algebraic properties of the intersection operator used in expansion with the different type systems obtained.
\newline

\begin{center}
\begin{tabular}{|c|c|c|}
\hline
{\bf $\cap$} & {\bf Target language} & {\bf Preserves reductions} \\
\hline
\hline
ACI & Simple Types & Weak Head Reduction \\
\hline
ACI & Relevant Types & $\beta$-reduction \\
\hline
AC & Affine Types & Weak Head Reduction \\
\hline
AC & Linear Types & $\beta$-reduction \\
\hline
A & Ordered Types & $\beta$-reduction \\
\hline
\end{tabular}
\end{center}

\renewcommand{\em}{\it}
\bibliographystyle{abbrv}
\bibliography{references}

\newpage
\appendix

\section{Proofs}
\label{section:proofs}

\lemaa*

\begin{proof}
By induction on the length of the type derivation.
\end{proof}

\theorema*

\begin{proof}
By induction on the length of the type derivation and using Lemma \ref{FV}.
\end{proof}

\FVA*

\begin{proof}
Straightforward induction on the length of the typing derivation.
\end{proof}

\Affine*

\begin{proof}
The proof follows by induction on the length of the typing derivation (case $\Rightarrow$) and on structural induction on the term $M$ (case $\Leftarrow$).
\end{proof}

\FVL*

\begin{proof}
Straightforward induction on the length of the typing derivation.
\end{proof}

\LinearTerms*

\begin{proof}
The proof follows by induction on the length of the typing derivation (case $\Rightarrow$) and on structural induction on the term $M$ (case $\Leftarrow$).
\end{proof}

\sn*

\begin{proof}The {\em if} part is proved by transforming a derivation in our type system in a derivation in the Coppo-Dezani type system (which types all strongly normalizable terms). This can be done by induction in the length of the derivation tree. The {\em only-if} part is similar to the proof of the same property for the Coppo-Dezani type system presented in \cite{Amadio:1998}. The tecnhique used is to show that if a term $M[N/x]$ is typable in an intersection type system with type $\tau$ then the redex $(\lambda x.M)N$ is also typable with the same type. The result follows by lifting this property to arbitrary terms using induction on the size of the term and on the maximal length of derivations starting in the term. 
\end{proof}

\times*

\begin{proof} By  structural induction on $M$.
\end{proof}

\diste*

\begin{proof} By the definitions of $\wedge$ and $\uplus$. \end{proof}

\distl*

\begin{proof} By the definitions of $\wedge$ and $\uplus$. \end{proof}

\expsn*

\begin{proof} By Theorem \ref{sn} the stronlgy normalizable terms
are the terms typable in the intersection type system presented. The
result follows by Theorem \ref{theorem:type_exp1} and Lemma \ref{type_exp2}. \end{proof}

\InterCurry*

\begin{proof} By Theorem \ref{sn} the stronlgy normalizable terms
are the terms typable in the intersection type system presented. The
result follows by Lemmas \ref{theorem:type_exp1} and \ref{type_exp2}. \end{proof}

\exptypesRelevant*

\begin{proof} Similar to the proof of Theorem \ref{exp_types} considering that it applies to every case except to applications of the Weakening rule. 
\end{proof}

\Relevant*

\begin{proof} By Lemma \ref{theorem:type_exp1}, $\Gamma \vdash M:\sigma$ $\Rightarrow$ ${\cal E}_I(M:\sigma) \lhd (N,e(A))$. The result follows by Theorem \ref{exp_types_Relevant}. \end{proof}

\SubI*

\begin{proof} The proof is identical to the case in the proof of
Lemma \ref{sub} where $x$ occurs free in $M$. \end{proof}

\exptypesLinear*

\begin{proof} Similar to the proof of Theorem \ref{exp_typesAffine} considering that it applies to every case except to applications of the Weakening rule. 
\end{proof}

\IL*

\begin{proof} Note that  $\Gamma \vdash M:\sigma$ $\Rightarrow$ ${\cal E}_C(M:\sigma) \lhd (N,e(A))$. The result follows by Theorem \ref{exp_types_Linear}. \end{proof}

\SubI*

\begin{proof} The proof is identical to the case in the proof of
Lemma \ref{sub} where $x$ occurs free in $M$. \end{proof}

\LinearI*

\begin{proof} Similar to the proof of Theorem \ref{red} for ACI-reduction.
\end{proof}

\typeexpone*

\begin{proof} By structural induction on $M$. 
\begin{enumerate}
\item Base case: $M$ is a term-variable $x$. $\{x:\tau\} \vdash_{\cap} x:\tau$. Then ${\cal E}_I(x:\tau) \lhd (y,\{x:\{y:\tau\}\})$ where $y$ is a fresh variable. The result follows noticing that $\{x:\{y:\tau\}\} = e(\{x:\tau\})$.
\item Induction step:
\begin{enumerate}
\item $M$ is of the form $\lambda x.N$ and $x$ occurs in $N$. In this case $\Gamma \cup \{x:\tau_1 \cap \dotsb \cap \tau_n\} \vdash_{\cap} N:\sigma$. By the induction hypothesis, 
$${\cal E}_I(N:\sigma) \lhd (N',e(\Gamma) \cup \{x:\{x_1:\tau_1, \ldots, x_n:\tau_n\}\})$$
Thus by the definition of expansion
$$ {\cal E}_I(\lambda x.N: \tau_1 \cap \ldots \cap \tau_n \rightarrow \sigma)\lhd(\lambda x_1 \ldots x_n.N',e(\Gamma))$$
\item $M$ is of the form $\lambda x.N$ and $x$ does not occur in $N$. In this case $\Gamma \vdash_{\cap} N:\sigma$. By the induction hypothesis,
$${\cal E}_I(N:\sigma) \lhd (N',e(\Gamma))$$
and by the definition of expansion
$${\cal E_I}(\lambda x.N:\tau \rightarrow \sigma) \lhd (\lambda y.N',e(\Gamma))$$
\item $M$ is of the form $M_1M_2$. In this case we have 
$$ \Gamma_0 \wedge \Gamma_1 \wedge \dotsb \wedge \Gamma_n \vdash_{\cap} M_1M_2:\sigma $$
Thus
\begin{enumerate}
\item $\Gamma_0 \vdash_{\cap} M_1:\tau_1 \cap \dotsb \cap \tau_n \rightarrow \sigma$
\item $\Gamma_i \vdash_{\cap} M_2:\tau_i (1 \leq i \leq n)$
\end{enumerate}
By the induction hypothesis:
\begin{enumerate}
\item ${\cal E}_I(M_1:\tau_1 \cap \ldots \cap \tau_n \rightarrow \sigma) \lhd (M_0,e(\Gamma_0))$
\item ${\cal E}_I(M_2:\tau_i) \lhd (N_i,e(\Gamma_i)), (1 \leq i \leq n)$
\end{enumerate}
Thus, by the definition of expansion,
$$ {\cal E}_I(M_1M_2:\sigma) \lhd (M_0N_1 \ldots N_n,e(\Gamma_0) \uplus \dotsb \uplus e(\Gamma_n))$$
and finally by Lemma \ref{dist_e}
$$ {\cal E}_I(M_1M_2:\sigma) \lhd (M_0N_1 \ldots N_n,e(\Gamma_0 \wedge \dotsb \wedge \Gamma_n))$$ 
\end{enumerate}
\end{enumerate}
\end{proof}

\lemmaexptwo*

\begin{proof} By structural induction on $M$.
\begin{enumerate}
\item Base case: $M$ is a term variable $x$. In this case ${\cal E}_I(x:\tau) \lhd (y,\{x:\{y:\tau\}\})$. We have $l(\{x:\{y:\tau\}\}) = \{x:\tau\}$. Finally $\{x:\tau\} \vdash_{\cap} x:\tau$.
\item Induction step:
\begin{enumerate}
\item $M$ is of the form $\lambda x.N$, and $x$ occurs in $N$. 
$${\cal E}_I(\lambda x.N:(\tau_1 \cap \dotsb \cap \tau_n \rightarrow \sigma)) \lhd (\lambda x_1 \ldots x_n.N',A)$$ where
$$ {\cal E}_I(N:\sigma) \lhd (N',A \cup \{x:\{x_1:\tau_1, \ldots, x_n:\tau_n\}\}) $$
By the induction hypothesis
$$ l(A) \cup \{x:\tau_1 \cap \dotsb \cap \tau_n\} \vdash_{\cap} N:\sigma $$
Thus
$$ l(A) \vdash \lambda x.N:\tau_1 \cap \dotsb \cap \tau_n \rightarrow \sigma$$
\item $M$ is of the form $\lambda x.N$ and $x$ does not occur in $N$.
$${\cal E}_I(\lambda x.N:\tau \rightarrow \sigma) \lhd (\lambda y.N',A)$$ where
$${\cal E}_I(N:\sigma) \lhd (N',A)$$
By the induction hypothesis
$$l(A) \vdash_{\cap} N:\sigma$$
Thus
$$l(A) \vdash \lambda x.N : \tau \rightarrow \sigma$$
\item $M$ is of the form $M_1M_2$. In this case we have:
$${\cal E}_I((M_1M_2):\sigma) \lhd (M_0N_1 \ldots N_k,A_0 \uplus \dotsb \uplus A_k)$$
and
\begin{enumerate}
\item ${\cal E}_I(M_1:\tau_1 \cap \dotsb \cap \tau_n \rightarrow \sigma) \lhd (M_0,A_0)$
\item ${\cal E}_I(M_2:\tau_i) \lhd (N_i,A_i) (1 \leq i \leq n)$
\end{enumerate}
By the induction hypothesis
\begin{enumerate}
\item $l(A_0) \vdash_{\cap} M_1:\tau_1 \cap \dotsb \cap \tau_n \rightarrow \sigma$
\item $l(A_i) \vdash_{\cap} M_2:\tau_i (1 \leq i \leq n)$
\end{enumerate}
Thus
$$l(A_0) \wedge \dotsb \wedge l(A_n) \vdash_{\cap} M_1M_2:\sigma$$
Finally, by Lemma \ref{dist_l},
$$l(A_0 \uplus \dotsb \uplus A_n) \vdash_{\cap} M_1M_2:\sigma$$
\end{enumerate}
\end{enumerate} 
\end{proof}

\exptypes*

\begin{proof} By structural induction on $M$. 
\begin{enumerate}
\item Base case. $M$ is a term variable $x$. In this case ${\cal E}_I(x:\tau) \lhd (y,\{x:\{y:\tau\}\})$.
${\cal T}_e(\{x:\{y:\tau\}\}) = \{y:{\cal T}(\tau)\}$. The result follows by the VAR rule for the Curry type system.
\item Induction step: 
\begin{enumerate}
\item $M$ is of the form $\lambda x.N$. Suppose that $x$ occurs free in $M$. Then
$${\cal E}_I(\lambda x.N:\tau_1 \cap \dotsb \cap \tau_n \rightarrow \sigma)  \lhd  (\lambda x_1 \ldots x_n.N^*,A)$$ 
and thus
$${\cal E}_I(N:\sigma) \lhd (N^*,A \cup \{x:\{x_1:\tau_1, \ldots, x_n:\tau_n\}\})$$
By the induction hypothesis and the definition of ${\cal T}_e$:
$${\cal T}_e(A) \cup \{x_1:{\cal T}(\tau_1), \ldots ,x_n:{\cal T}(\tau_n)\} \vdash_C N^*:{\cal T}(\sigma)$$
Thus, by successive applications of the ABS-I rule:
$$ {\cal T}_e(A) \vdash_C \lambda x_1 \ldots x_n.N^*:{\cal T}(\tau_1) \rightarrow \dotsb \rightarrow {\cal T}(\tau_n) \rightarrow {\cal T}(\sigma) $$
The result follows by the definition of ${\cal T}$. The case where $x$ does not occur in $N$ is similar, with an extra application of the Weakening rule, thus we will omit it.
\item $M$ is of the form $M_1M_2$. In this case:
\begin{enumerate}
\item ${\cal E}_I(M_1:\tau_1 \cap \dotsb \cap \tau_n \rightarrow \sigma) \lhd (M_0,A_0)$
\item ${\cal E}_I(M_2:\tau_i) \lhd (N_i,A_i) (1 \leq i \leq n)$
\end{enumerate}
By the induction hypothesis 
\begin{enumerate}
\item ${\cal T}_e(A_0) \vdash_C M_0:{\cal T}(\tau_1) \rightarrow \dotsb \rightarrow {\cal T}(\tau_n) \rightarrow {\cal T}(\sigma)$
\item ${\cal T}_e(A_i) \vdash_C N_i:{\cal T}(\tau_i)$, for $(1 \leq i \leq n)$.
\end{enumerate}
Notice that the variables in $A_0, \ldots ,A_n$ are all distinct, because expansion contexts are generated with expansions of occurrences of free variables such that if a variable occurs in different expansions contexts it must occur with the same type, otherwise it was not possible to define the resulting expansion context $A_0 \uplus A_1 \uplus \dotsb \uplus A_n$. Thus the same happens for ${\cal T}_e(A_0), \ldots, {\cal T}_e(A_n)$. This guarantees that in ${\cal T}_e(A_0) \cup \dotsb \cup {\cal T}_e(A_n)$ all variables are distinct. Thus
$${\cal T}_e(A_0) \cup \dotsb \cup {\cal T}_e(A_n) \vdash_C M_0N_1 \ldots N_n:{\cal T}(\sigma)$$  
\end{enumerate}
\end{enumerate}
\end{proof}

\subsub*

\begin{proof} The proof will follow by structural induction on
$M$. Notice that, by Lemma \ref{times}, $x$ occurs free in $M$.
\begin{enumerate}
\item Base case: $M \equiv x$. In this case:
$${\cal E}_I(x:\sigma) \lhd (y,\{x:\{y:\sigma\}\})$$
thus
$${\cal E}_I(x[N/x]:\sigma) \lhd (y[N_1/y],A_1)$$
where
$${\cal E}_I(N:\sigma) \lhd (N_1,A_1)$$
\item Induction step: 
\begin{enumerate}
\item $M \equiv \lambda y.M_0$. Assume that $y$ occurs free in $M_0$. The other case is simpler. In this case:
$${\cal E}_I(\lambda y.M_0:\delta_1 \cap \dotsb \cap \delta_n
\rightarrow \sigma_1) \lhd (\lambda y_1 \ldots y_n.M_0^*, A_0 \uplus
\{x:\{x_1:\tau_1, \dots, x_k:\tau_k\}\})$$
by the definition of expansion we have
$${\cal E}_I(M_0:\sigma_1) \lhd (M_0^*, A_0 \uplus
\{x:\{x_1:\tau_1, \ldots, x_k:\tau_k\}\} \uplus \{y:\{y_1:\delta_1,
\ldots, y_n:\delta_n\}\}$$
and
$${\cal E}_I(N:\tau_i) \lhd (N_i,A_i), i \in \{1, \ldots, k\}$$
By the induction hypothesis it follows:
$${\cal E}_I(M_0[N/x]:\sigma_1) \lhd (M_0^*[N_1/x_1, \ldots,
  N_k/x_k],A_0 \uplus \dotsb \uplus A_k \uplus \{y:\{y_1:\delta_1,
\ldots, y_n:\delta_n\}\}$$
thus
$${\cal E}_I (\lambda y.M_0)[N/x]:\delta_1 \cap \dotsb \cap \delta_n
\rightarrow \sigma_1) \lhd (\lambda y_1 \ldots y_n.M_0^*[N_1/x_1, \ldots,
  N_k/x_k], A_0 \uplus \dotsb \uplus A_k)$$
\item $M \equiv M_1M_2$.
$${\cal E}_I(M_1M_2:\sigma) \lhd (P_0P_1 \ldots P_n,B_0 \uplus B_1
\uplus \dotsb \uplus B_n \uplus \{x:\{x_1:\tau_1, \ldots , x_k:\tau_k\}\})$$
and
$${\cal E}_I(N:\tau_i) \lhd (N_i,A_i), i \in \{1,\ldots,k\}$$
Let $X = \{x:\{x_1:\tau_1, \ldots, x_k:\tau_k\}\} = X_0 \uplus \dotsb \uplus
X_n$ where $X_i = \{x:\{x^i_1:\tau^i_1, \ldots,
x^i_{k_i}:\tau^i_{k_i}\}\}$ and $\{x^i_1,\ldots,x^i_{k_i}\}$ is the subset
of $\{x_1, \ldots ,x_k\}$ whose elements occur in $P_i$  for $i \in
\{0,\ldots,n\}$.
Now we have
$${\cal E}_I(M_1:\delta_1 \cap \dotsb \cap \delta_n \rightarrow \sigma) \lhd
(P_0,B_0 \uplus X_0)$$
and
$${\cal E}_I(M_2:\delta_i) \lhd (P_i, B_i \uplus X_i), i \in
\{1,\ldots,n\}$$
Let $T =\{N_1,\ldots,N_k\} = T_0 \cup \dotsb \cup T_n$ where $T_i = \{N^i_1,\ldots,N^i_{k_i}\}$ is the subset of $T$ whose elements occur
in $P_i[N_1/x_1,\ldots,N_k/x_k]$ and such that
${\cal E}_I(N:\tau^i_j) \lhd (N^i_j,A_j)$ for $i \in \{1,\ldots,n\}$ and
$j \in \{1, \ldots, k_i\}$.
By the induction hypothesis we have:
$${\cal E}_I(M_1[N/x]:\delta_1 \cap \ldots \cap \delta_n \rightarrow
\delta) \lhd (P_0[x^0_1/N^0_1,\ldots,x^0_{k_0}/N^0_{k_0}],B_0)$$
and
$${\cal E}_I(M_2[N/x]:\delta_i) \lhd (P_i[x^i_1/N^i_1,\ldots,x^i_{k_i}/N^i_{k_i}],B_i), i
\in \{1,\ldots,n\})$$
Thus
$${\cal E}_I((M_1M_2)[N/x]:\sigma) \lhd ((P_0P_1 \dotsb P_n)[x_1/N_1, \ldots,x_k/N_k],B_0 \uplus B_1 \uplus \ldots \uplus B_n)$$
\end{enumerate}
\end{enumerate}
\end{proof}

\sub*

\begin{proof} We will consider two cases:
\begin{enumerate} 
\item $x \in FV(M)$. By the definition of expansion:
$${\cal E}_I((\lambda x.M)N:\sigma) \lhd ((\lambda x_1 \ldots x_k.M_0)N_1
\ldots N_k,A_0 \uplus \dotsb \uplus A_k)$$
where
$${\cal E}_I((\lambda x.M):\tau_1 \cap \dotsb \cap \tau_k \rightarrow
\sigma) \lhd (\lambda x_1 \ldots x_k.M_0,A_0)$$ 
and
$${\cal E}_I(N:\tau_i) \lhd (N_i,A_i)$$
Then we have
$${\cal E}_I(M:\sigma) \lhd (M_0,A_0 \uplus \{x:\{x_1:\tau_1, \ldots,
x_k:\tau_k\}\})$$
By Lemma \ref{sub_sub} we have:
$${\cal E}_I(M[N/x]:\sigma) \lhd (M_0[N_1/x_1, \ldots, N_k/x_k],A_0
\uplus \dotsb \uplus A_k)$$
\item $x \not \in FV(M)$. In
this case
$${\cal E}_I((\lambda x.M)N:\sigma) \lhd ((\lambda y.M_0)N_0,A_0 \uplus
A_1)$$
where
$${\cal E}_I(\lambda x.M:\tau \rightarrow \sigma) \lhd (\lambda
y.M_0,A_0)$$
and
$${\cal E}_I(N:\tau) \lhd (N_0,A_1)$$
Thus
$${\cal E}_I(M:\sigma) \lhd (M_0,A_0)$$
and
$${\cal E}_I(M[N/x]:\sigma) = {\cal E}(M:\sigma) \lhd (M_0,A_0)$$
Note that $A_0 \sqsubseteq A_0 \uplus A_1.$ 
\end{enumerate}
\end{proof}

To show that expansion is preserved by weak head reduction we need the
concept of {\em context} as a term containing one hole $[\  ]$.
\begin{definition}
{\em Contexts} $C[\ ]$ are described by:
\begin{enumerate}
\item $[\ ]$ is a context;
\item If $C[\ ]$ is a context and $M$ a $\lambda$-term, then $C[\ ]M$,
  $MC[\ ]$ and $\lambda x.C[\ ]$ are contexts.
\end{enumerate}
\end{definition}
If $M$ is a $\lambda$-term and $C[\ ]$ a context then $C[M]$ is the
result of replacing the hole in $C[\ ]$ with $M$. Note that this
operation is different from that of substitution because no renaming
of bound variables is allowed.

\tfive*

\begin{proof} We will consider two cases:
\begin{enumerate} 
\item $x \in FV(M)$. By the definition of expansion:
$${\cal E}_I((\lambda x.M)N:\sigma) \lhd ((\lambda x_1 \ldots x_k.M_0)N_1
\ldots N_k,A_0 \uplus \dotsb \uplus A_k)$$
where
$${\cal E}_I((\lambda x.M):\tau_1 \cap \dotsb \cap \tau_k \rightarrow
\sigma) \lhd (\lambda x_1 \ldots x_k.M_0,A_0)$$ 
and
$${\cal E}_I(N:\tau_i) \lhd (N_i,A_i)$$
Then we have
$${\cal E}_I(M:\sigma) \lhd (M_0,A_0 \uplus \{x:\{x_1:\tau_1, \ldots,
x_k:\tau_k\}\})$$
By Lemma \ref{sub_sub} we have:
$${\cal E}_I(M[N/x]:\sigma) \lhd (M_0[N_1/x_1, \ldots, N_k/x_k],A_0
\uplus \dotsb \uplus A_k)$$
\item $x \not \in FV(M)$. In
this case:
$${\cal E}_I((\lambda x.M)N:\sigma) \lhd ((\lambda y.M_0)N_0,A_0 \uplus
A_1)$$
where
$${\cal E}_I(\lambda x.M:\tau \rightarrow \sigma) \lhd (\lambda
y.M_0,A_0)$$
and
$${\cal E}_I(N:\tau) \lhd (N_0,A_1)$$
Thus
$${\cal E}_I(M:\sigma) \lhd (M_0,A_0)$$
and
$${\cal E}_I(M[N/x]:\sigma) = {\cal E}(M:\sigma) \lhd (M_0,A_0)$$
Note that $A_0 \sqsubseteq A_0 \uplus A_1.$ 
\end{enumerate}
\end{proof}

\red*

\begin{proof} We will use structural induction on the context $C$
such that $M_1
 \overset{R}{\underset{\beta}{\rightarrow}} M_2$ and $M_1 \equiv
 C[R]$.
\begin{enumerate}
\item Base case: $M_1$ is the $\beta$-redex $R$. The proof follows
  from Lemma \ref{subI}.
\item Induction step: 
\begin{enumerate}
\item $M_1 = \lambda x.C[R]$. In this case:
$$ {\cal E}_I(\lambda x.C[R]:\delta_1 \cap \dotsb \cap \delta_k
\rightarrow \delta) \lhd (\lambda x_1 \ldots x_k.M^*,A)$$
$x$ occurs free in $\lambda x.C[R]$, thus
$${\cal E}_I(C[R]:\delta) \lhd (M^*,A \uplus
\{x:\{x_1:\delta_1,\ldots,x_k:\delta_k\}\})$$
By the induction hypothesis there is a term $N_2$ such that
$$C[R] \underset{\beta}{\rightarrow} P$$
and
$${\cal E}_I(P:\delta) \lhd (N_2,A \uplus \{x:\{x_1:\delta_1, \ldots,
x_k:\delta_k\}\}$$
and
$$M^* \underset{\beta}{\twoheadrightarrow} N_2 $$
Thus
$$\lambda x.C[R] \underset{\beta}{\rightarrow} \lambda x.P$$
$${\cal E}_I(\lambda x.P: \delta_1 \cap \dotsb \cap \delta_k \rightarrow
\delta) \lhd (\lambda x_1 \ldots x_k.N_2,A)$$
and
$$\lambda x_1 \ldots x_k.M^* \underset{\beta}{\twoheadrightarrow}
\lambda x_1 \ldots x_k.N_2$$
\item $M_1 \equiv C[R]W$. In this case
$${\cal E}_I(C[R]W:\sigma) \lhd (P_0P_1 \dotsb P_k,A_0 \uplus A_1 \uplus
\ldots \uplus A_k)$$
Thus
$${\cal E}_I(C[R]:\tau_1 \cap \dotsb \cap \tau_k \rightarrow \sigma)
\lhd (P_0,A_0)$$
and
$${\cal E}_I(W:\tau_i) \lhd (P_i,A_i)$$
By the induction hypothesis there is a term $P_0^*$ such that
$$C[R] \underset{\beta}{\rightarrow} P$$
$${\cal E}(P:\tau_1 \cap \dotsb \cap \tau_k \rightarrow \sigma) \lhd
(P_0^*,A_0)$$ 
and
$$P_0 \underset{\beta}{\twoheadrightarrow} P_0^*$$
Thus
$${\cal E}_I(PW:\sigma) \lhd (P_0^* P_1 \ldots P_k,A_0 \uplus \dotsb \uplus
A_k)$$ and
$$P_0P_1\ldots P_k \underset{\beta}{\twoheadrightarrow} P_0^*P_1\ldots
P_k$$
\item Suppose that $M_1C[R] \underset{\beta}{\rightarrow}
  M_1N_2$. Thus
$${\cal E}_I(M_1C[R]) \lhd (P_0P_1 \ldots P_k,A_0 \uplus \ldots \uplus
A_k)$$
Thus
$${\cal E}_I(M_1:\tau_1 \cap \ldots \cap \tau_k \rightarrow \sigma) \lhd
(P_0,A_0)$$
and
$${\cal E}_I(C[R]:\tau_i) \lhd (P_i,A_i), i \in \{1,\ldots,k\}$$
By the induction hypothesis for $i \in \{1,\ldots,k\}$ there are terms
$P_i^*$ such that
$${\cal E}_I(N_2:\tau_i) \lhd (P_i^*,A_i)$$
and
$$P_i \underset{\beta}{\twoheadrightarrow} P_i^*$$
Thus
$${\cal E}_I(M_1N_2) \lhd (P_0P_1^* \ldots P_k^*,A_0 \uplus \dotsb \uplus
A_k)$$
and
$$P_0P_1 \ldots P_k \underset{\beta}{\twoheadrightarrow} P_0P_1^*
\ldots P_k^* $$
\end{enumerate}
\end{enumerate}
\end{proof}

\orderexp*

\begin{proof}By induction on $M$. 
\begin{itemize}
    \item $\expo(x : \sigma) \goto (y^{\at{\sigma}}, \octx{x: \octx{y: \at{\sig}}})$ and  $\cetct{\octx{x: \octx{y: \at{\sig}}}} = y: \at{\sig}$ and trivially we have $y: \at{\sig} \vdash_o y: \at{\sig}$.
    \item $\expo(\l{x}.M : \sig_1 \cap \dotsb \cap \sig_n \llto \sig) \goto  (\l{x_1 \dots x_n}.M_0^{\at{\sig_1} \llto_{r} \dotsb \llto_{r} \at{\sig_n} \llto_r \at{\sigma}}, A)$, follows from $\expo(M : \sig) \goto (M_0^{\at{\sigma}},\ A+\octx{x : \octx{x_1:\at{\sig_1},\dots,x_n:\at{\sig_n}}})$. By the \ih $$\cetct{A},x_1:\at{\sig_1},\dots,x_n:\at{\sig_n}\vdash_o M_0: \at{\sigma}$$
    Applying $\introarrowr$ $n$ times, one gets:
    $$\cetct{A}\vdash_o \l{x_1 \dots x_n}.M_0: \at{\sig_1} \llto_{r} \dotsb \llto_{r} \at{\sig_n} \llto_r \at{\sigma}$$
    \item $\expo(\l{x}.M : \sig_1 \cap \dotsb \cap \sig_n \llto \sig) \goto  (\l{x_1 \dots x_n}.M_0^{\at{\sig_1} \llto_{l} \dotsb \llto_{l} \at{\sig_n} \llto_l \at{\sigma}}, A)$, follows from $\expo(M : \sig) \goto (M_0^{\at{\sigma}},\octx{x : \octx{x_n:\at{\sig_n},\dots,x_1:\at{\sig_1}}}+A)$. By the \ih $$x_n:\at{\sig_n},\dots,x_1:\at{\sig_1},\cetct{A}\vdash_o M_0: \at{\sigma}$$
    Applying $\introarrowl$ $n$ times, one gets:
    $$\cetct{A}\vdash_o \l{x_1 \dots x_n}.M_0: \at{\sig_1} \llto_{l} \dotsb \llto_{l} \at{\sig_n} \llto_l \at{\sigma}$$
    \item $\expo(M N : \sig)  \goto ((M_0 N_1 \dots N_m)^{\at{\sig}},  A_0+ A_1+ \dotsb+ A_m)$, follows from $\expo(M : \sig_1 \cap \dots \cap \sig_m \llto \sig) \goto (M_0^{\at{\sig_1} \llto_r \dotsb \llto_r \at{\sig_m} \llto_r \at{\sig}}, A_0)$ and $\left( \expo(N : \sig_i) \goto (N_i^{\at{\sig_i}}, A_i) \right)_{\iom}$, for some $m > 0$ and $\sig_1,\dots,\sig_m$. By the \ih 
    $$\cetct{A_0} \vdash_o M_0:\at{\sig_1} \llto_r \dotsb \llto_r \at{\sig_m} \llto_r \at{\sig}$$
    and $$\left(\cetct{A_i} \vdash_o N_i: \at{\sig_i}\right)_{\iom}$$
     Applying $\elimarrowr$ $m$ times, one gets:
     $$\cetct{A_0},\cetct{A_1},\dots,\cetct{A_m} \vdash_o M_0N_1\dots N_m: \at{\sig}$$
     \item $\expo(M N : \sig)  \goto ((M_0 N_1 \dots N_m)^{\at{\sig}}, A_m+ \dotsb+ A_1+A_0)$, follows from $\expo(M : \sig_1 \cap \dots \cap \sig_m \llto \sig) \goto (M_0^{\at{\sig_1} \llto_l \dotsb \llto_l \at{\sig_m} \llto_l \at{\sig}}, A_0)$ and $\left( \expo(N : \sig_i) \goto (N_i^{\at{\sig_i}}, A_i) \right)_{\iom}$, for some $m > 0$ and $\sig_1,\dots,\sig_m$. By the \ih 
    $$\cetct{A_0} \vdash_o M_0:\at{\sig_1} \llto_l \dotsb \llto_l \at{\sig_m} \llto_l \at{\sig}$$
    and $$\left(\cetct{A_i} \vdash_o N_i: \at{\sig_i}\right)_{\iom}$$
     Applying $\elimarrowl$ $m$ times, one gets:
     $$\cetct{A_m},\dots,\cetct{A_1},\cetct{A_0} \vdash_o M_0N_1\dots N_m: \at{\sig}$$
\end{itemize}
\end{proof}

\occur*
\begin{proof}
By induction on $M$.
\end{proof}
\subst*

\begin{proof}
By induction on $M$.
\begin{itemize}
    \item $M=x$: $\expo(M,\sig) \goto (y^{\at{\sig}},\octx{x:\octx{y:\at{\sig}}})$ with $y$ fresh and $\expo(N,\sig)\goto(N_1^{\at{\sig}},A_1)$. Then $\expo(x[N/x],\sig) = \expo(N,\sig) = (N_1^{\at{\sig}},A_1) = (y[N_1/y,A_1)$. Note that in this case $A_0=A_{n+1}=\varnothing$.
    \item $M=\lambda y.M'$, we have two cases:
    \begin{enumerate}
        \item $\expo(\l{y}.M' : \sig_1 \cap \dotsb \cap \sig_k \llto \sig)  \goto (\l{y_1 \dots y_k}.M_0^{\at{\sig_1} \llto_{r} \dotsb \llto_{r} \at{\sig_k} \llto_r \at{\sigma}}, A_0+\octx{x:\octx{x_1:\at{\tau_1},\dots,x_n:\at{\tau_n}}}+A_{n+1})$ if $y \in \fv{M'}$  and $$\expo(M' : \sig) \goto (M_0^{\at{\sig}}, A_0+\octx{x:\octx{x_1:\at{\tau_1},\dots,x_n:\at{\tau_n}}}+A_{n+1} + \octx{y:\octx{y_1:\at{\sig_1},\dots,y_k:\at{\sig_k}}}).$$ 
        By the \ih, $$\expo(M'[N/x],\sig) \goto (M_0[N_1/x_1,\dots,N_n/x_n]^{\at{\sig}},A_0+_{\ion} A_i + A_{n+1}+\octx{x:\octx{y_1:\at{\sig_1},\dots,y_k:\at{\sig_k}}}),$$ from which follows $$\expo((\ y.M')[N/x], \sig_1 \cap \dotsb \cap \sig_k \llto \sig) \goto ((\l y_1 \dots y_k. M_0[N_1/x_1,\dots,N_n/x_n])^{\at{\sig_1} \llto_{r} \dotsb \llto_{r} \at{\sig_k} \llto_r \at{\sigma}},A)$$
        where $A=A_0+_{\ion} A_i + A_{n+1}$.
        \item $\expo(\l{y}.M' : \sig_1 \cap \dotsb \cap \sig_k \llto \sig)  \goto (\l{y_1 \dots y_k}.M_0^{\at{\sig_1} \llto_{l} \dotsb \llto_{l} \at{\sig_k} \llto_l \at{\sigma}}, A_0+\octx{x:\octx{x_1:\at{\tau_1},\dots,x_n:\at{\tau_n}}}+A_{n+1})$ if $y \in \fv{M'}$ and $$\expo(M' : \sig) \goto (M_0^{\at{\sig}}, \octx{y:\octx{y_k:\at{\sig_k},\dots,y_1:\at{\sig_1}}}+ A_0+\octx{x:\octx{x_1:\at{\tau_1},\dots,x_n:\at{\tau_n}}}+A_{n+1}  ).$$
        By the \ih, $$\expo(M'[N/x],\sig) \goto (M_0[N_1/x_1,\dots,N_n/x_n]^{\at{\sig}},\octx{y:\octx{y_k:\at{\sig_k},\dots,y_1:\at{\sig_1}}} + A_0+_{\ion} A_i + A_{n+1}),$$ from which follows $$\expo((\l y.M')[N/x], \sig_1 \cap \dotsb \cap \sig_k \llto \sig) \goto ((\l y_1 \dots y_k. M_0[N_1/x_1,\dots,N_n/x_n])^{\at{\sig_1} \llto_{l} \dotsb \llto_{l} \at{\sig_k} \llto_l \at{\sigma}},A),$$
        where $A=A_0+_{\ion} A_i + A_{n+1}$.
    \end{enumerate}
    \item $M=PQ$, we have two cases:
        \begin{enumerate}
            \item $\expo(PQ : \sig) \goto ((P_0 Q_1 \dots Q_m)^{\at{\sig}},  A_0+\octx{x:\octx{x_1:\at{\tau_1},\dots,x_n:\at{\tau_n}}}+A_{n+1})$,
            if for some $m > 0$ and $\sig_1,\dots,\sig_m$, $$\expo(P : \sig_1 \cap \dots \cap \sig_m \llto \sig) \goto (P_0^{\at{\sig_1} \llto_r \dotsb \llto_r \at{\sig_m} \llto_r \at{\sig}}, B_0)$$  and  $$\left( \expo(N : \sig_j) \goto (N_j^{\at{\sig_j}}, B_j) \right)_{\jom},$$ where $A_0+\octx{x:\octx{x_1:\at{\tau_1},\dots,x_n:\at{\tau_n}}}+A_{n+1} = B_0+B_1+\dotsb+ B_m$. 
            Let \begin{itemize}
                \item $\{x_1,\dots,x_n\} = \{x_{01},\dots, x_{0i_0}, x_{11},\dots, x_{1i_1},\dots,x_{m1},\dots,x_{mi_m}\}$,
                \item $B_0 = A_0' + \underbrace{\octx{x:\octx{x_{01}:\at{\tau_{01}},\dots, x_{0i_0}:\at{\tau_{0i_0}}}}}_{A_0''} + A_0'''$,  $B_j = B_j' + \underbrace{\octx{x:\octx{x_{j1}:\at{\tau_{j1}},\dots, x_{ji_j}:\at{\tau_{ji_j}}}}}_{B_j''} + B_j'''$, for $\jom$, 
                \item $A_0 = A_0'$, $A_{n+1} = A_0''' +_{\jom} (B_j'+B_j''')$ 
                \item and $A_{1\dots n} = A_0'' +_{k=1\dots n} B_k''$. 
            \end{itemize}
            By the \ih, 
            $$\expo(P[N/x]: \sig_1 \cap \dots \cap \sig_n \llto \sig) \goto ((P_0[N_{01}/x_{01},\dots,N_{0i_0}/x_{0i_0}])^{\at{\sig_1} \llto_r \dotsb \llto_r \at{\sig_n} \llto_r \at{\sig}},A_0'+A_{01}+\dotsb+A_{0i_0} + A_0''')$$
            and, for $\jom$,
             $$\expo(Q[N/x]: \sig_j) \goto ((Q_j[N_{j1}/x_{j1},\dots,N_{ji_j}/x_{ji_j}])^{\at{\sig_j}},B_j'+A_{j1}+\dotsb+A_{ji_j} + B_j''')$$
             from which follows:
             $$\expo(PQ[N/x]: \sig) \goto ((P_0Q_1\dots Q_m[N_{01}/x_{01},\dots N_{j1}/x_{ji_j}])^{\at{\sig}},A_0 + A_1+\dotsb+A_n +A_{n+1})$$
             where $A_0 + A_1+\dotsb+A_n +A_{n+1}= A_0'+A_{01}+\dotsb+A_{0i_0} + A_0'''+_{\jom} (B_j'+A_{j1}+\dotsb+A_{ji_j} + B_j''') $  given that  $A_0' = A_0$, $A_{01}+\dotsb+A_{0i_0}+\dotsb +A_{m1}+\dotsb+A_{mi_m} = A_1+\dotsb+A_n$ and $A_{n+1} = A_0''' +_{\jom} (Bj'+Bj''')$. 
             \item $\expo(PQ : \sig) \goto ((P_0 Q_1 \dots Q_m)^{\at{\sig}},  A_0+\octx{x:\octx{x_1:\at{\tau_1},\dots,x_n:\at{\tau_n}}}+A_{n+1})$,
            if for some $m > 0$ and $\sig_1,\dots,\sig_m$  $$\expo(P : \sig_1 \cap \dots \cap \sig_m \llto \sig) \goto (P_0^{\at{\sig_1} \llto_l \dotsb \llto_l \at{\sig_m} \llto_l \at{\sig}}, B_0)$$  and  $$\left( \expo(N : \sig_j) \goto (N_j^{\at{\sig_j}}, B_j) \right)_{\jom},$$ where $A_0+\octx{x:\octx{x_1:\at{\tau_1},\dots,x_n:\at{\tau_n}}}+A_{n+1} = B_m+\dotsb+ B_1+B_0$. Let 
            \begin{itemize}
                \item $\{x_1,\dots,x_n\} = \{x_{m1},\dots x_{mi_m}, \dots,x_{01},\dots x_{0i_0}\}$,
                \item $B_0 = A''' + \underbrace{\octx{x:\octx{x_{01}:\at{\tau_{01}},\dots x_{0i_0}:\at{\tau_{0i_0}}}}}_{A_0''} + A_0'$,  $B_j = B_j''' + \underbrace{\octx{x:\octx{x_{j1}:\at{\tau_{j1}},\dots x_{ji_j}:\at{\tau_{ji_j}}}}}_{B_j''} + B_j'$ for $\jom$,
                \item  $A_{n+1} = A_0'$, $A_0 =+_{j=m\dots 1} (B_j'+B_j''')+ A_0'''$,
                \item and $A_{1\dots n} =  +_{k=n\dots 1} B_k''+ A_0''$.
            \end{itemize}
            
             By the \ih, 
            $$\expo(P[N/x]: \sig_1 \cap \dots \cap \sig_n \llto \sig) \goto ((P_0[N_{01}/x_{01},\dots,N_{0i_0}/x_{0i_0}])^{\at{\sig_1} \llto_l \dotsb \llto_l \at{\sig_n} \llto_l \at{\sig}},A_0'''+A_{01}+\dotsb+A_{0i_0} + A_0')$$
            and, for $\jom$,
             $$\expo(Q[N/x]: \sig_j) \goto ((Q_j[N_{j1}/x_{j1},\dots,N_{ji_j}/x_{ji_j}])^{\at{\sig_j}},B_j'''+A_{j1}+\dotsb+A_{ji_j} + B_j')$$
             from which follows:
             $$\expo(PQ[N/x]: \sig) \goto ((P_0Q_1\dots Q_m[N_{01}/x_{01},\dots N_{j1}/x_{ji_j}])^{\at{\sig}},A_0 + A_1+\dotsb+A_n +A_{n+1}$$
             
             where $A_0 + A_1+\dotsb+A_n +A_{n+1} = +_{j=m\dots 1} (B_j'''+A_{j1}+\dotsb+A_{ji_j} + B_j') A_0'''+A_{01}+\dotsb+A_{0i_0} + A_0')$.
        \end{enumerate}
    
\end{itemize}
\end{proof}

\reduction*
\begin{proof}
We consider two cases:
\begin{enumerate}
    \item $\expo((\lambda x.M)N:\sig) \goto ((\lambda x_1\dots x_n.M^*)N_1\dots N_n, +_{\izn} A_i)$ follows from $\expo(M,\sig) \goto (M^*, A_0 + \octx{x:\octx{x_1:\at{\tau_1},\dots,x_n:\at{\tau_n}}})$ and $\expo(N,\tau_i)\goto(N_i,A_i)$ for some $\tau_1,\dots,\tau_n$, for $\izn$. Then, from Lemma~\ref{lem:subst}, it follows $\expo(M[N/x]:\sig) \goto (M^*[N_1/x_1,\dots,N_n/x_n], +_{\izn} A_i)$ and $$(\lambda x_1\dots x_n.M^*)N_1\dots N_n \underset{\beta}{\twoheadrightarrow} M^*[N_1/x_1,\dots,N_n/x_n]. $$
    \item $\expo((\lambda x.M)N:\sig) \goto ((\lambda x_1\dots x_n.M^*)N_1\dots N_n, A_n+\dots+A_1+A_0)$ follows from $\expo(M,\sig) \goto (M^*, \octx{x:\octx{x_n:\at{\tau_n},\dots,x_1:\at{\tau_1}}}+ A_0)$ and $\expo(N,\tau_i)\goto(N_i,A_i)$ for some $\tau_1,\dots,\tau_n$, for $\izn$. Then, from Lemma~\ref{lem:subst}, it follows $\expo(M[N/x]:\sig) \goto (M^*[N_n/x_n,\dots,N_1/x_1], A_n+\dots+A_1+A_0)$ and $$(\lambda x_1\dots x_n.M^*)N_1\dots N_n \underset{\beta}{\twoheadrightarrow} M^*[N_1/x_1,\dots,N_n/x_n] \equiv M^*[N_n/x_n,\dots,N_1/x_1]$$
\end{enumerate}
\end{proof}
\icalculus*
\begin{proof}
 We will use structural induction on the context $C[R]$ such that $M_1 \overset{R}{\underset{\beta}{\rightarrow}} M_2$ and $M_1 \equiv C[R]$.
\begin{enumerate}
\item Base case: $M_1$ is the $\beta$-redex $R$. The proof follows
  from Lemma~\ref{lem:reduction}.
\item Induction step: 
\begin{itemize}
\item $M_1 \equiv \lambda x.C[R] \overset{R}{\underset{\beta}{\rightarrow}} \lambda x.P \equiv M_2$ with $C[R] \overset{R}{\underset{\beta}{\rightarrow}} P$. We have two cases:
\begin{enumerate}
    \item $$\expo(\lambda x.C[R]:\sig_1 \cap \dotsb \cap \sig_n
\rightarrow\sig) \goto ((\lambda x_1 \dotsb x_n.M^*)^{\at{\sig_1} \llto_r \dotsb \llto_r \at{\sig_n} \llto_r \at{\sig}},A),$$ which follows from $$\expo(C[R]:\sig) \goto ({M^*}^{\at{\sig}},A+\octx{x:\octx{x_1:\at{\sig_1},\dots, x_n: \at{\sig_n}}}).$$  By the \ih, there is a term $N$ such that $$\expo(P:\sig) \goto({N}^{\at{\sig}},A+\octx{x:\octx{x_1:\at{\sig_1},\dots, x_n: \at{\sig_n}}})$$ with $M^* \underset{\beta}{\twoheadrightarrow} N$. From which follows $$\expo(\lambda x. P:\sig_1 \cap \dotsb \cap \sig_n
\rightarrow \sig) \goto((\lambda x_1 \ldots x_n.N)^{\at{\sig_1} \llto_r \dotsb \llto_r \at{\sig_n} \llto_r \at{\sig}},A)$$ and $\lambda x_1 \ldots x_n.M^* \underset{\beta}{\twoheadrightarrow}
\lambda x_1 \ldots x_n.N$, as expected.
  \item $$\expo(\lambda x.C[R]:\sig_1 \cap \dotsb \cap \sig_n
\rightarrow\sig) \goto ((\lambda x_1 \ldots x_n.M^*)^{\at{\sig_1} \llto_l \dotsb \llto_l \at{\sig_n} \llto_l \at{\sig}},A),$$ which follows from $$\expo(C[R]:\sig) \goto ({M^*}^{\at{\sig}},\octx{x:\octx{x_n:\at{\sig_n},\dots, x_1: \at{\sig_1}}}+A).$$  By the \ih, there is a term $N$ such that $$\expo(P:\sig) \goto({N}^{\at{\sig}},\octx{x:\octx{x_n:\at{\sig_n},\dots, x_1: \at{\sig_1}}}+A)$$ with $M^* \underset{\beta}{\twoheadrightarrow} N $. From which follows $$\expo(\lambda x. P:\sig_1 \cap \dotsb \cap \sig_n
\rightarrow \sig) \goto((\lambda x_1 \ldots x_n.N)^{\at{\sig_1} \llto_l \dotsb \llto_l \at{\sig_n} \llto_l \at{\sig}},A)$$ and $\lambda x_1 \ldots x_n.M^* \underset{\beta}{\twoheadrightarrow} \lambda x_1 \ldots x_n.N$, as expected.
\end{enumerate}
\item $M_1 \equiv C[R]W
\overset{R}{\underset{\beta}{\rightarrow}} PW \equiv M_2$ with $C[R] \overset{R}{\underset{\beta}{\rightarrow}} P$. We have two cases:
\begin{enumerate}
    \item $$\expo(C[R]W:\sig) \goto (P_0P_1\dots P_n, +_{\izn} A_i),$$ which follows from $$\expo(C[R]:\sig_1 \cap \dotsb \cap \sig_n \rightarrow \sig) \goto(P_0^{\at{\sig_1} \llto_r \dotsb \llto_r \at{\sig_n} \llto_r \at{\sig}}, A_0)$$ and  $$\expo(W:\sig_i) \goto (P_i, A_i)$$ for some $\sig_1,\dots,\sig_n$ for $\ion$. By the \ih there is a term $N$ such that $$\expo(P:\sig_1 \cap \dotsb \cap \sig_n \rightarrow \sig) \goto(N^{\at{\sig_1} \llto_r \dotsb \llto_r \at{\sig_n} \llto_r \at{\sig}}, A_0)$$ and $P_0 \underset{\beta}{\twoheadrightarrow} N$, from which follows $$\expo(PW:\sig) \goto (NP_1\dots P_n, +_{\izn} A_i)$$ and $P_0P_1\dots P_n \underset{\beta}{\twoheadrightarrow} NP_1\dots P_n$, as expected.
     \item $$\expo(C[R]W:\sig) \goto (P_0P_1\dots P_n, A_n+\dotsb+A_1+A_0),$$ which follows from $$\expo(C[R]:\sig_1 \cap \dotsb \cap \sig_n \rightarrow \sig) \goto(P_0^{\at{\sig_1} \llto_l \dotsb \llto_l \at{\sig_n} \llto_l \at{\sig}}, A_0)$$ and  $$\expo(W:\sig_i) \goto (P_i, A_i)$$ for some $\sig_1,\dots,\sig_n$ for $\ion$. By the \ih there is a term $N$ such that $$\expo(P:\sig_1 \cap \dotsb \cap \sig_n \rightarrow \sig) \goto(N^{\at{\sig_1} \llto_l \dotsb \llto_l \at{\sig_n} \llto_l \at{\sig}}, A_0)$$ and $P_0 \underset{\beta}{\twoheadrightarrow} N$, from which follows $$\expo(PW:\sig) \goto (NP_1\dots P_n, A_n+\dots+A_1+A_0)$$ and $P_0P_1\dots P_n \underset{\beta}{\twoheadrightarrow} NP_1\dots P_n$, as expected.
\end{enumerate}
\item $M_1 \equiv WC[R]
\overset{R}{\underset{\beta}{\rightarrow}} WP \equiv M_2$ with $C[R] \overset{R}{\underset{\beta}{\rightarrow}} P$. We have two cases:
\begin{enumerate}
    \item $$\expo(WC[R]:\sig) \goto (P_0P_1\dots P_n, +_{\izn} A_i),$$ which follows from $$\expo(P:\sig_1 \cap \dotsb \cap \sig_n \rightarrow \sig) \goto(P_0^{\at{\sig_1} \llto_r \dotsb \llto_r \at{\sig_n} \llto_r \at{\sig}}, A_0)$$ and  $$\expo(C[R]:\sig_i) \goto (P_i, A_i)$$ for some $\sig_1,\dots,\sig_n$ for $\ion$. By the \ih there are terms $N_1 \dots,N_n$ such that $$\expo(P:\sig_i) \goto(N_i^{\at{\sig_i}} , A_i)$$ and $P_i \underset{\beta}{\twoheadrightarrow} N_i$, for $\ion$. Therefore $$\expo(WP:\sig) \goto (P_0N_1\dots N_n, +_{\izn} A_i)$$ and $P_0P_1\dots P_n \underset{\beta}{\twoheadrightarrow} P_0N_1\dots N_n$, as expected.
     \item $$\expo(WC[R]:\sig) \goto (P_0P_1\dots P_n,A_n+\dotsb+A_1+A_0),$$ which follows from $$\expo(P:\sig_1 \cap \dotsb \sig_n \rightarrow \sig) \goto(P_0^{\at{\sig_1} \llto_l \dotsb \llto_l \at{\sig_n} \llto_l \at{\sig}}, A_0)$$ and  $$\expo(C[R]:\sig_i) \goto (P_i, A_i)$$ for some $\sig_1,\dots,\sig_n$ for $\ion$. By the \ih there are terms $N_1 \dots,N_n$ such that $$\expo(P:\sig_i) \goto(N_i^{\at{\sig_i}} , A_i)$$ and $P_i \underset{\beta}{\twoheadrightarrow} N_i$, for $\ion$. Therefore $$\expo(WP:\sig) \goto (P_0N_1\dots N_n, A_n+\dotsb+ A_1+A_0)$$ and $P_0P_1\dots P_n \underset{\beta}{\twoheadrightarrow} P_0N_1\dots N_n$, as expected.
\end{enumerate}    
\end{itemize}
\end{enumerate}
\end{proof}

\end{document}